\newcommand{\R}{\mathbb{R}}
\newcommand{\N}{\mathbb{N}}
\newcommand{\mP}{\mathbb{P}}
\newcommand{\PP}{\mathbb{P}_{(t,x)}}
\newcommand{\e}{{\mathlarger{\rm e}}}
\newcommand{\mE}{\mathbb{E}}
\newcommand{\E}{\mathbb{E}}
\newcommand{\md}{\;{\rm d}}
\newcommand{\s}{\sum\limits}
\newcommand{\de}{\Delta}
\newcommand{\w}{\wedge}
\newcommand{\bq}{\begin{eqnarray*}}
\newcommand{\eq}{\end{eqnarray*}}
\newcommand{\one}{1\mkern-5mu{\hbox{\rm I}}}
\theoremstyle{break}
\newtheorem{Def}{Definition}[section]
\newtheorem{Bem}[Def]{Remark}
\newtheorem{Lem}[Def]{Lemma}
\newtheorem{Satz}[Def]{Proposition}
\newtheorem{Prop}[Def]{Proposition}
\newtheorem{Thm}[Def]{Theorem}
\newenvironment{proof}{\noindent{\textit{Proof:}}}{%
\unskip\nobreak\hfil\penalty50\hskip1em\null\nobreak
$\Box$
\parfillskip=\z@\finalhyphendemerits=0\endgraf\bigskip}
\let\oldendBsp\endBsp
\def\endBsp{\unskip\nobreak\hfil\penalty50\hskip1em\null\nobreak\hfil%
$\blacksquare$\parfillskip=\z@\finalhyphendemerits=0\endgraf\oldendBsp}
\let\oldendBem\endBem
\def\endBem{\unskip\nobreak\hfil\penalty50\hskip1em\null\nobreak\hfil%
$\blacksquare$\parfillskip=\z@\finalhyphendemerits=0\endgraf\oldendBem}
\author[1]{Julia Eisenberg\thanks{corresponding author: \small {\tt jeisenbe@fam.tuwien.ac.at}}\ }
\author[2]{Paul Kr\"uhner\thanks{peisenbe@liverpool.ac.uk}}
\date{}
\affil[1]{TU Wien}
\affil[2]{University of Liverpool}
\title{Suboptimal Control of Dividends under Exponential Utility}
\begin{document}
\maketitle
\begin{abstract}\noindent
We consider an insurance company modelling its surplus process by a Brownian motion with drift. Our target is to maximise the expected exponential utility of discounted dividend payments, given that the dividend rates are bounded by some constant.
\\The utility function destroys the linearity and the time homogeneity of the considered problem. The value function depends not only on the surplus, but also on time. Numerical considerations suggest that the optimal strategy, if it exists, is of a barrier type with a non-linear barrier. In the related article \cite{ghsz}, it has been observed that standard numerical methods break down in certain parameter cases and no close form solution has been found.
\\
For these reasons, we offer a new method allowing to estimate the distance of an arbitrary smooth enough function to the value function. Applying this method, we investigate the goodness of the most obvious suboptimal strategies - payout on the maximal rate, and constant barrier strategies - by measuring the distance of its performance function to the value function.
\vspace{6pt}
\noindent
\\{\bf Key words:} suboptimal control, Hamilton--Jacobi--Bellman equation, dividend payouts, Brownian risk model, exponential utility function.
\settowidth\labelwidth{{\it 2010 Mathematical Subject Classification: }}%
                \par\noindent {\it 2010 Mathematical Subject Classification: }%
                \rlap{Primary}\phantom{Secondary}
                93E20\newline\null\hskip\labelwidth
                Secondary 91B30, 60H30
\end{abstract}

\section{Introduction}
Dividend payments of companies is one of the most important factors for analytic investors when they have to decide whether they invest into the firm.
Furthermore, dividends serve as a sort of litmus paper, indicating the financial health of the considered company. Indeed, the reputation, and consequently commercial success of a company with a long record of dividend payments would be negatively impacted in the case the company will drop the payments. On the other hand, new companies can additionally strengthen their position by declaring dividends.
For the sake of fairness, it should be mentioned that there are also some serious arguments against dividend payouts, for example for tax reasons it might be advantageous to withhold dividend payments. A discussion of the pros and contras of dividends distribution is beyond the scope of the present manuscript. We refer to surveys on the topic by Avanzi \cite{avanzi} or Albrecher and Thonhauser \cite{AlbThReview}.

Due to its importance, the value of expected discounted dividends has been for a long time, and still remains, one of the most popular risk measures in the actuarial literature.
Lots of papers have been written on maximizing the dividend outcome in various models. 

Gerber \cite{gerb}, B\"uhlmann \cite{buhl}, Azcue and Muler \cite{azmu}, Albrecher and Thonhauser \cite{alth} are just some of the results obtained since de Finetti's path-breaking paper \cite{defin}. 
\\Shreve, Lehoczky and Gaver \cite{Shreve} considered the problem for a general diffusion process, where the drift and the volatility fulfil some special conditions.
Modelling the surplus process via a Brownian motion with drift, was considered by Asmussen and Taksar \cite{astak}, who could find the optimal strategy to be a constant barrier. 

All the papers mentioned above deal with linear dividend payments. Non-linear dividend payments are considered in Hubalek and Schachermayer \cite{hub}. They apply various utility functions to the dividend rates before accumulation. Their result differs a lot from the classical result described in Asmussen and Taksar \cite{astak}.
An interesting question is to consider the expected ``present utility'' of the discounted dividend payments. It means the utility function will be applied on the value of the discounted dividend payments.

Modeling the surplus by a Brownian motion with drift, Grandits et.\ al applied in \cite{ghsz} an exponential utility function to the value of unrestricted discounted dividends. In other words, they considered the expected utility of the present value of dividends and not the expected discounted utility of the dividend rates. In that paper, the existence of the optimal strategy could not be shown.
We will investigate the related problem where the dividend payments are restricted to a finite rate. 
Note that using a non-linear utility function increases the dimension of the problem. Therefore, as for now, tackling the problem via the Hamilton--Jacobi--Bellman approach in order to find an explicit solution seems to be an unsolvable task. Of course, one can prove the value function to be the unique viscosity solution to the corresponding Hamilton--Jacobi--Bellman equation and try to solve the problem numerically. However, if the maximal allowed dividend rate is quite big the standard numerical methods like finite differences and finite elements break down. We discuss this in Section \ref{bsp}.

In this paper, we offer a new approach. Instead of proving the value function to be the unique viscosity solution to the corresponding Hamilton--Jacobi--Bellman equation, we investigate the ``goodness'' of suboptimal strategies.
We simply choose an arbitrary control with an easy-to-calculate return function and compare its performance, or rather an approximation of its performance, against the unknown value function. 
The method is based on sharp occupation bounds which we find by a method developed for sharp density bounds in Ba\~nos and Kr\"uhner \cite{bakr}. This enables us to make an educated guess and to check if our pick is indeed almost as good as the optimal strategy. 
\\This approach drastically differs from procedures usually used for calculation of the value function in two ways. First, unlike most numerical schemes there is no convergence to the value function, i.e.\ one only gets a bound for the performance of one given strategy but no straightforward procedure to get better strategies. Second, our criterion has almost no influence from the dimension of the problem and is consequently directly applicable in higher dimensions. 

The paper is organised as follows. In the next section, we motivate the problem and derive some basic properties of the value function. In Section 3, we consider the case of the maximal constant dividend rate strategy, the properties of the corresponding return function and the goodness of this strategy (a bound for the distance of the return function to the unknown value function). Section 4 investigates the goodness of a constant barrier strategy.
In Section 5, we consider examples illustrating the classical and the new approach. Finally, in the appendix we gather technical proofs and establish occupation bounds.

\section{Motivation}
\noindent
We consider an insurance company whose surplus is modelled as a Brownian motion with drift
\[
X_t=X_0+\mu t+\sigma W_t\quad t\geq 0\;
\]
where $\mu,\sigma, X_0 \in\mathbb R$ on some filtered space $(\Omega,\mathfrak A,(\mathcal F_t)_{t\geq 0})$ where $\mathcal F$ is assumed to be the right-continuous filtration generated by $X$. $X$ is obviously a Markov-process and we denote by $\mathbb P_{(t,x)}$ respectively $\mathbb E_{(t,x)}$ the probability measure respectively expectation conditioned on $X_t=x$, also we use the notation $\mE_x:=\mathbb E_{(0,x)}$. Further, we assume that the company has to pay out dividends, characterized by a dividend rate. Denoting the dividend rate process by $C$, we can write the ex-dividend surplus process as
\[
X^C_t=x+\mu t+\sigma W_t-\int_0^t C_s\md s\;.
\]  
In the present manuscript we allow only allow dividend rate processes $C$ which are progressively measurable and satisfy $0\le C_s\le \xi$ for some maximal rate $\xi>0$ at any time $s\geq 0$. We call these strategies {\em admissible}. Let $U(x)=\frac1\gamma-\frac1\gamma e^{-\gamma x}$, $\gamma>0$, be the underlying utility function and $\tau^C:=\inf\{s\ge t:X_s^C<0\}$ the ruin time corresponding to the strategy $c$ under the measure $\mathbb P_{(t,x)}$. Our objective is to maximize the expected exponential utility of the discounted dividend payments until ruin. Since we can start our observation in every time point $t\in\R_+$, the target functional is given by
\[
V^C(t,x):=\mE_{(t,x)}\Big[U\Big(\int_t^{\tau^C} e^{-\delta s}C_s\md s\Big)\Big]\;.
\]
Here, we assume that the dividend payout up to $t$ equals $0$, for a rigorous simplification confer \cite{ghsz} or simply note that with already paid dividends $\bar C$ up to time $t$ we have

$$ \mE_{(t,x)}\Big[U\Big(\bar C+ \int_t^{\tau^C} e^{-\delta s}C_s\md s\Big)|\mathcal F_t\Big] = U(\bar C)+e^{-\gamma \bar C}V^C(t,x). $$
The corresponding value function $V$ is defined by
\[
V(t,x):=\sup\limits_{C}\mE_{(t,x)}\Big[U\Big(\int_t^{\tau^C} e^{-\delta s}C_s\md s\Big)\Big]\;
\]
where the supremum is taken over all admissible strategies $C$. Note that $V(t,0)=0$, because ruin will happen immediately due to the oscillation of Brownian motion, i.e.\ $\tau^C = \min\{s\geq t: X_s^C=0\}$ for any strategy $C$ under $\PP$.
The Hamilton--Jacobi--Bellman (HJB) equation corresponding to the problem can be found similar as in \cite{ghsz}, for general explanations confer for instance \cite{HS}:
\begin{equation}
V_t+\mu V_x+\frac{\sigma^2}{2}V_{xx}+\sup\limits_{0\le y\le\xi}\Big[y\Big(-V_x+e^{-\delta t}(1-\gamma V)\Big)\Big]=0\;.\label{hjb}
\end{equation}
We like to stress at this point that we neither show that the value function solves the (HJB) in some sense, nor that a good enough solution is the value function. In fact, our approach of evaluating the goodness of a given strategy compared to the unknown optimal strategy does not assume any knowledge about the optimal strategy.

Assuming that the optimal strategy $C^*$ exists and that the value function does satisfy the HJB in a suitable sense, we have for the optimal dividend rates
\[
C^*_s=
\begin{cases}
0 &{\rm if} \;\; -V_x+e^{-\delta s}(1-\gamma V)<0\\
\in[0,\xi]& {\rm if}\;\; -V_x+e^{-\delta s}(1-\gamma V)=0\\
\xi & {\rm if}\;\; -V_x+e^{-\delta s}(1-\gamma V)>0\;.
\end{cases}
\]
$\PP$-a.s.\ for any $s\geq t$.
\begin{Bem} \label{uniform}
For every dividend strategy $C$ it holds: 
\begin{equation*}
V^C(t,x)= \mE_{(t,x)}\Big[U\Big(\int_t^{\tau^C}C_s e^{-\delta s}\md s \Big)\Big]\le U\Big(\xi\int_t^{\infty} e^{-\delta s}\md s \Big)= U\Big(\frac\xi\delta e^{-\delta t} \Big)
\end{equation*}
We conclude
\[
\lim\limits_{x\to\infty}V(t,x)\le U\Big(\frac\xi\delta e^{-\delta t}\Big)\;,
\]
and $V$ is a bounded function. Consider now a constant strategy $C_t\equiv\xi$, i.e. we always pay on the rate $\xi$. The ex-dividend process becomes a Brownian motion with drift $\mu-\xi$ and volatility $\sigma$.
Define further
\begin{equation}
\eta_n:=\frac{\xi-\mu-\sqrt{(\xi-\mu)^2+2\delta\sigma^2 n}}{\sigma^2}\;,\label{eta}
\end{equation}
and let $\tau^\xi:=\inf\{s\ge t:\; x+(\mu-\xi) s+\sigma W_s\le 0\}$, i.e. $\tau^\xi$ is the ruin time under the strategy $\xi$. Here and in the following we define
\begin{equation} 
\de:= \xi\gamma/\delta. \label{de}
\end{equation}
With help of change of measure technique, see for example \cite[p. 216]{HS}, we can calculate the return function $V^\xi$ of the constant strategy $C_t\equiv\xi$ by using the power series representation of the exponential function:
\begin{align}
V^\xi(t,x)&= \mE_{x}\Big[U\Big(\xi\int_t^{\tau^\xi}e^{-\delta s}\md s\Big)\Big]
=\frac 1\gamma-\frac 1\gamma\mE_{x}\Big[e^{-\de \big(e^{-\delta t}-e^{-\delta(t+\tau^\xi)}\big)}\Big]\nonumber
\\&=\frac 1\gamma-\frac 1\gamma e^{-\de e^{-\delta t}}\mE_{x}\Big[e^{\de e^{-\delta(t+\tau^\xi)}}\Big]=\frac 1\gamma -\frac { e^{-\de e^{-\delta t}}}\gamma\s_{n=0}^\infty \frac{e^{-\delta tn}\de^n}{n!}\mE_x[e^{-\delta \tau^\xi n}]\nonumber
\\&=\frac 1\gamma -\frac { e^{-\de e^{-\delta t}}}\gamma-\frac { e^{-\de e^{-\delta t}}}\gamma\s_{n=1}^\infty \frac{e^{-\delta tn}\de^n}{n!}e^{\eta_n x}\;.\label{vxi}
\end{align}
It is obvious, that in the above power series $\lim\limits_{x\to\infty}$ and summation can be interchanged yielding $\lim\limits_{x\to\infty} V^\xi(t,x)=U\Big(\frac\xi\delta e^{-\delta t}\Big)$.
In particular, we can now conclude% using Remark \ref{uniform}
\[
\lim\limits_{x\to\infty}V(t,x)=\frac 1\gamma-\frac 1\gamma\exp\big(-\Delta e^{-\delta t}\big)=U\Big(\frac\xi\delta e^{-\delta t}\Big)\;.
\]
uniformly in $t\in[0,\infty)$.
\end{Bem}
Next, we show that for some special values of the maximal rate $\xi$ with a positive probability the ex-dividend surplus process remains positive up to infinity.
\begin{Bem}
Let $C$ be an admissible strategy,
where $X^C$ is the process under the strategy $C$. Let further $X^\xi$ be the process under the constant strategy $\xi$, i.e. $X^\xi$ is a Brownian motion with drift $(\mu-\xi)$ and volatility $\sigma$.
Then it is clear that
\[
X_s^\xi\le X^C_s\;.
\]
If $\mu>\xi$ then it holds, see for example \cite[p. 223]{bs}, $\mP_{(t,x)}[\tau^C=\infty]\ge\mP_{(t,x)}[\tau^\xi= \infty]>0$.  
\end{Bem}

Finally, we gather one structural property of the value function which, however, is not used later.
\begin{Satz}
The value function is Lipschitz-continuous, strictly increasing in $x$ and decreasing in $t$.
\end{Satz}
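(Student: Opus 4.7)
\textit{Plan.} The approach rests on rewriting $V$ in a time-homogeneous form. Using the strong Markov property of $X$ and the shift $\tilde C_u:=C_{t+u}$, every admissible strategy on $[t,\infty)$ corresponds to an admissible strategy on $[0,\infty)$, yielding
\[
V(t,x)=\sup_{\tilde C}\mE_x\!\left[U(e^{-\delta t}\,Y^{\tilde C,x})\right],\qquad Y^{\tilde C,x}:=\int_0^{\tilde\tau^{\tilde C,x}}\!\!e^{-\delta u}\tilde C_u\md u\in[0,\xi/\delta],
\]
where $\tilde\tau^{\tilde C,x}$ is the ruin time of $x+\mu u+\sigma W_u-\int_0^u\tilde C_r\md r$. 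Since $Y^{\tilde C,x}$ is independent of $t$, monotonicity in $t$ is immediate ($U$ increasing and $e^{-\delta t}$ decreasing), and Lipschitz continuity in $t$ with constant $\xi$ follows from $U'\le 1$, $|e^{-\delta t}-e^{-\delta s}|\le\delta|t-s|$ and $Y^{\tilde C,x}\le\xi/\delta$; both properties pass to the supremum. The same-Brownian-path coupling $\tilde X^{\tilde C,x_2}=\tilde X^{\tilde C,x_1}+(x_2-x_1)$ for $x_1<x_2$ gives $\tau_1:=\tilde\tau^{\tilde C,x_1}\le\tau_2:=\tilde\tau^{\tilde C,x_2}$ and $Y^{\tilde C,x_1}\le Y^{\tilde C,x_2}$ pathwise, hence monotonicity in $x$.

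For Lipschitz continuity in $x$, the coupling together with $U'\le 1$ yields
\[
V^{\tilde C}(t,x_2)-V^{\tilde C}(t,x_1)\le\frac{\xi}{\delta}\,\mE\!\left[e^{-\delta\tau_1}-e^{-\delta\tau_2}\right].
\]
The key step is the uniform-in-$\tilde C$ control of the right-hand side. Conditional on $\mathcal F_{\tau_1}$ the continuation of $\tilde X^{\tilde C,x_2}$ starts at level $x_2-x_1$ and, since $\tilde C\ge 0$, is pathwise dominated by the uncontrolled Brownian motion with drift $\mu$, so $\tau_2-\tau_1$ is pathwise bounded by the first-passage time $T_0$ of that Brownian motion to $0$. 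The classical Laplace transform then gives $\mE[e^{-\delta(\tau_2-\tau_1)}\mid\mathcal F_{\tau_1}]\ge\mE[e^{-\delta T_0}]=e^{-\lambda(x_2-x_1)}$ with $\lambda:=(\mu+\sqrt{\mu^2+2\delta\sigma^2})/\sigma^2$, and $1-e^{-\lambda y}\le\lambda y$ delivers Lipschitz continuity in $x$ with constant $\xi\lambda/\delta$ after passing to the supremum.

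Strict monotonicity in $x$ does not follow from the naive coupling, since an $\varepsilon$-optimal strategy may vanish on $[\tau_1,\tau_2]$. I would therefore modify the strategy: given $\varepsilon>0$ and an $\varepsilon$-optimal $\tilde C^\varepsilon$ for $V(t,x_1)$, set $\bar C^\varepsilon$ for $V(t,x_2)$ to equal $\tilde C^\varepsilon$ up to $\tau_1^\varepsilon:=\tilde\tau^{\tilde C^\varepsilon,x_1}$ and to pay at maximal rate $\xi$ thereafter. On $\{\tau_1^\varepsilon<\infty\}$, which carries a positive mass uniformly in $\tilde C^\varepsilon$ via $\tau_1^\varepsilon\le\tilde\tau^{0,x_1}$ and $\mE[e^{-\delta\tau_1^\varepsilon}]\ge e^{-\lambda x_1}>0$, the process $\tilde X^{\bar C^\varepsilon,x_2}$ reaches the switch time at the strictly positive level $x_2-x_1$, so the extra discounted dividends $\Delta^\varepsilon=(\xi/\delta)(e^{-\delta\tau_1^\varepsilon}-e^{-\delta\tilde\tau^{\bar C^\varepsilon,x_2}})$ are strictly positive. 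A mean-value expansion of $U$ together with the uniform lower bound $U'\ge e^{-\gamma\xi/\delta}$ on the relevant range, combined with the Laplace-transform estimates for $\tau_1^\varepsilon$ and for $\tilde\tau^{\bar C^\varepsilon,x_2}-\tau_1^\varepsilon$ (exact drift $\mu-\xi$ starting from $x_2-x_1$), produces $V^{\bar C^\varepsilon}(t,x_2)-V^{\tilde C^\varepsilon}(t,x_1)\ge c(t,x_1,x_2)>0$ independent of $\varepsilon$, and letting $\varepsilon\to 0$ in $V(t,x_2)\ge V^{\bar C^\varepsilon}(t,x_2)\ge V^{\tilde C^\varepsilon}(t,x_1)+c\ge V(t,x_1)-\varepsilon+c$ yields the strict inequality. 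The main obstacle throughout is the uniform-in-strategy control of $\mE[e^{-\delta\tau_1}-e^{-\delta\tau_2}]$ and its strict-inequality analogue, which is reduced by stochastic domination to the explicit Laplace transform of a Brownian first-passage time.
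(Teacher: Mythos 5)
Your proposal is correct, and it follows a genuinely different route from the paper's. For Lipschitz continuity in $x$, the paper does not use a coupling argument: it first establishes a boundary estimate at small $x$ (namely $V(t,h)\le -\tfrac{\xi}{\delta}\zeta_1 h$ via $\tau^C\le\tau^0$), and then for general $(t,x)$ takes an $\varepsilon$-optimal strategy for $(t,x+h)$, stops at $\tilde\tau=\inf\{s\ge t: X^C_s=h\}$, and applies the subadditivity $U(a+b)\le U(a)+U(b)$ to split the integral at $\tilde\tau$, reducing the increment to the boundary estimate. For Lipschitz continuity in $t$, the paper considers the time-shifted strategy $\tilde C_s=C_{s-h}\one_{\{s\ge h\}}$ and uses concavity of $U$ (with $U(0)=0$) to get $V(t+h,x)\ge e^{-\delta h}V(t,x)$, whence the Lipschitz constant $\delta U(\xi/\delta)$. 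You instead rewrite $V$ in the time-homogeneous form $V(t,x)=\sup_{\tilde C}\mE_x[U(e^{-\delta t}Y^{\tilde C,x})]$ and then exploit $U'\le 1$ both for the $t$-variable (giving the slightly weaker constant $\xi$ rather than $\delta U(\xi/\delta)$) and for the $x$-variable, where you couple the two processes on the same Brownian path and reduce the tail $\mE[e^{-\delta\tau_1}-e^{-\delta\tau_2}]$ to the explicit Laplace transform $e^{\zeta_1(x_2-x_1)}$ via stochastic domination of $\tau_2-\tau_1$ by the uncontrolled first-passage time; both routes yield the same $x$-Lipschitz constant $-\tfrac{\xi}{\delta}\zeta_1$. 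Your approach has the virtue of making the time-homogeneous structure explicit and handling both variables by the single inequality $U'\le 1$, whereas the paper's is more economical in that it never needs the conditioning-plus-domination step. Finally, you supply an honest argument for strict monotonicity in $x$ (modifying the $\varepsilon$-optimal strategy to pay at maximal rate after $\tau_1$, then using the uniform lower bounds $\mE[e^{-\delta\tau_1}]\ge e^{\zeta_1 x_1}$ and $U'\ge e^{-\gamma\xi/\delta}$), where the paper simply asserts ``it is clear''; your argument is correct and fills in a real gap in the paper's exposition.
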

\begin{proof}
It is clear that $V$ is strictly increasing in $x$.
\\Consider further $(t,0)$ with $t\in\R_+$. Let $h,\varepsilon>0$ and $C$ be an arbitrary admissible strategy. Let $\tau^0$ be the ruin time for the strategy which is constant zero. 
Define
\begin{equation}
\varrho_n:=\frac{\sqrt{\mu^2+2\sigma^2 \delta n}}{\sigma^2}\;,\quad 
\theta_n:=\frac{-\mu}{\sigma^2}+\varrho_n \quad \mbox{and}\quad\zeta_n:=\frac{-\mu}{\sigma^2}-\varrho_n \;\label{tz}
\end{equation}
for any $n\in\mathbb N$. Using $\mE_{h}[e^{-\delta \tau^0}]=e^{\zeta_1 h}$, confer for instance \cite[p.\ 295]{bs}, it follows with $X^0_s\ge X_s^C$:
\begin{align}
V^C(t,h)&=\mE_{(t,h)}\bigg[U\Big(\int_{t}^{\tau^C}e^{-\delta s}C_{s}\md s\Big)\bigg]
 \le \mE_{h}\bigg[U\Big(\xi\int_{t}^{t+\tau^0}e^{-\delta s}\md s\Big)\bigg]\nonumber
 \\&=\mE_{h}\bigg[U\Big(\frac\xi\delta e^{-\delta t}(1-e^{-\delta \tau^0})\Big)\bigg]
\le \frac\xi\delta  e^{-\delta t}\big(1-e^{\zeta_1 h}\big)\le -\frac\xi\delta \zeta_1 h \label{motiv:est}\;.
\end{align} 
\\-- Let $h\geq 0$ and $\tau^0$ be the ruin time for the strategy which is constant zero. 
Let $(t,x)\in\R_+^2$ be arbitrary, $C$ be an admissible strategy which is $\varepsilon$-optimal for the starting point $(t,x+h)$, i.e. $V(t,x+h)-V^C(t,x+h)\le \varepsilon$.
Define further $\tilde\tau:=\inf\{s\ge t: \; X^C_s=h\}$. Then $X_s\geq 0$ for $s\in [t,\tilde\tau]$ under $\PP$ because $X_s\geq h$ for $s\in[t,\tilde\tau]$ under $\mP_{(t,x+h)}$.
Since $U$ fulfils $U(a+b)\le U(a)+U(b)$ for any $a,b\geq 0$, we have
\begin{align*}
V(t,x+h)&\le V^{C}(t,x+h)+\varepsilon
=\mE_{(t,x+h)}\bigg [U\bigg(\int_{t}^{\tau^C} e^{-\delta s} C_{s} \md s\bigg)\bigg]+\varepsilon
\\&=\mE_{(t,x+h)}\bigg[U\bigg(\int_{t}^{\tilde \tau} e^{-\delta s} C_{s} \md s+\int_{\tilde\tau}^{\tau^C} e^{-\delta s} C_{s} \md s\bigg)\bigg]+\varepsilon
%\\&\ge \mE\bigg[\one_{\{\tau^\xi>y\}}U\bigg(\frac{\xi}\delta e^{-\delta t}\big(1-e^{-\delta y}\big)+e^{-\delta y}\int_{t}^{\tau^C} e^{-\delta s} C_{s} \md s\bigg)\bigg]+\varepsilon
\\&\le \mE_{(t,x+h)}\bigg[U\bigg(\int_{t}^{\tilde \tau} e^{-\delta s} C_{s} \md s\bigg)\bigg]+\mE_{(t,x+h)}\bigg[U\bigg(\int_{\tilde\tau}^{\tau^C} e^{-\delta s} C_{s} \md s\bigg)\bigg]+\varepsilon
\\&\le V^C(t,x)+\mE_{(t,x+h)}\bigg[U\bigg(\frac\xi\delta\big(e^{-\delta \tilde \tau}-e^{-\delta\tau^C}\big)\bigg)\bigg]+\varepsilon
\\&\le V(t,x)+ \mE_{h}\bigg[U\bigg(\frac\xi\delta\big(1-e^{-\delta\tau^0}\big)\bigg)\bigg]+\varepsilon
\;.
\end{align*}
The last inequality follows from monotony of $U$ and the fact that the $\mP_{(\tilde\tau,h)}$-random time $\tau^C-\tilde\tau$ is less or equal than $\tau^0$ $\mP_{(\tilde\tau,h)}$-a.s. Because $\varepsilon$ was arbitrary and due to \eqref{motiv:est} we find
\[
0\le V(t,x+h)-V(t,x)\le -\frac\xi\delta \zeta_1 h\;.
\]
Consequently, $V$ is Lipschitz-continuous in the space variable $x$ with Lipschitz-constant at most $-\frac\xi\delta \zeta_1$.
\medskip
\\-- Next, we consider the properties of the value function concerning the time variable. Because $\delta>0$, it is clear that $V$ is strictly decreasing in $t$. 
\\
Let further $(t,x)\in\R_+^2$, $h>0$ and $C$ be an admissible strategy. Then, the strategy $\tilde C$ with $\tilde C_s:=C_{s-h}\one_{\{s\geq h\}}$ is admissible. Since, $U$ is concave we have
\begin{align*}
V(t+h,x)&\ge V^{\tilde C}(t+h,x)= \mE_{(t+h,x)}\bigg[U\Big(\int_{t+h}^{\tau^C+h}e^{-\delta s}C_{s-h}\md s\Big)\bigg]
\\&=\mE_{(t,x)}\bigg[U\Big(e^{-\delta h}\int_{t}^{\tau^C}e^{-\delta s}C_{s}\md s\Big)\bigg]
\ge e^{-\delta h}V^C(t,x)\;.
\end{align*}
Building the supremum over all admissible strategies on the right side of the above inequality and using Remark \ref{uniform}, yields 
\[
0\ge V(t+h,x)-V(t,x)\ge V(t,x)(e^{-\delta h}-1)\ge - U\Big(\frac\xi\delta\Big)\delta h
\]
and, consequently, $V$ is Lipschitz-continuous as a function of $t$ with constant $\delta U(\xi/\delta)$.
\end{proof}

\section{Payout on the Maximal Rate \label{maxrate}}
\subsection{Could it be optimal to pay on the maximal rate up to ruin?}
\noindent
At first, we investigate the constant strategy $\xi$, i.e.\ the dividends will be paid out at the maximal rate $\xi$ until ruin. In this section we find exact conditions under which this strategy is optimal. We already know from \eqref{vxi} that the corresponding return function is given by 
\[
V^\xi(t,x)=\frac1\gamma-\frac1\gamma e^{-\de e^{-\delta t}}-e^{-\de e^{-\delta t}}\s_{n=1}^\infty \frac{\de^n}{\gamma n!}e^{-\delta tn}e^{\eta_n x}\;.
\]
It is obvious that $V^\xi$ is increasing and concave in $x$ and decreasing in $t$. For further considerations we will need the following remark. 
\begin{Bem}\label{bem:prop}
Consider $\eta_n$, defined in \eqref{eta}, as a function of $\xi$.  \medskip
\\$\mathbf{1.}$ Since $$\frac\md{\md\xi}\eta_n=\frac{-\eta_n}{\sqrt{(\xi-\mu)^2+2\delta\sigma^2 n}},$$ it is easy to see that $\eta_n(\xi)$ and $\frac{\eta_{n+1}(\xi)n}{\eta_n(\xi)(n+1)}$ are increasing in $\xi$. Also, we have 
\[
\lim\limits_{\xi\to\infty}\frac{\eta_{n+1}(\xi)n}{\eta_n(\xi)(n+1)}=1\;.
\]
We conclude that $\frac{\eta_{n+1}}{(n+1)}>\frac{\eta_{n}}{n}$.\medskip 
\\$\mathbf{2.}$ Further, we put to record
\[
\lim\limits_{\xi\to\infty}\xi\eta_n(\xi)=-\delta n\;.
\]
$\mathbf{3.}$ Also, we have
\begin{align*}
\frac{\md}{\md \xi}\Big(\delta n+\xi\eta_n(\xi)\Big)&
=\eta_n\Big(1-\frac{\xi}{\sqrt{(\xi-\mu)^2+2\delta\sigma^2n}}\Big)
\begin{cases}
<0 & \xi<\frac{\mu^2+2\delta\sigma^2n}{2\mu}
\\\ge 0 &\xi\ge\frac{\mu^2+2\delta\sigma^2n}{2\mu}\;.
\end{cases}
\end{align*}
Thus, at $\xi=0$ the function $\xi\mapsto \delta n+\xi\eta_n(\xi)$ attains the value $\delta n>0$, at its minimum point $\xi^*=\frac{\mu^2+2\delta\sigma^2n}{2\mu}$ we have
\[
\delta n+\xi^*\eta_n(\xi^*)=\delta n-\frac{\mu^2+2\delta\sigma^2n}{2\sigma^2}=-\frac{\mu^2}{2\sigma^2}<0\;
\]
and, finally, for $\xi\to\infty$ it holds, due to Item 2 above, that $\lim\limits_{\xi\to 0}\delta n+\xi\eta_n(\xi)=0$. Thus, for every $n\in\N$ the function $\xi \mapsto 1+\frac{\eta_n(\xi)\xi}{\delta n}$ has a unique zero at $\frac{\delta n\sigma^2}{2\mu}$.
\end{Bem}
Further, it is easy to check that in $V^\xi$ summation and differentiation can be interchanged.
Derivation with respect to $x$ yields
\[
V^\xi_x(t,x)=-e^{-\de e^{-\delta t}}\s_{n=1}^{\infty}\frac{\de^n}{\gamma n!}e^{-\delta tn}\eta_n e^{\eta_nx}\;.
\]
In order to answer the optimality question, we have to look at the function $-V_x^\xi+e^{-\delta t}\big(1-\gamma V^\xi\big)$.
For simplicity, we multiply the above expression by $e^{\delta t}e^{\de e^{-\delta t}}$, substitute $e^{-\delta t}$ by $t\in[0,1]$ and define
\begin{align}
\psi(t,x):&=
%\frac{e^{\de t}}t\Big\{-h_x(t,x)+t\big(1-\gamma h(t,x)\big)\Big\}
\s_{n=0}^{\infty}t^{n}\frac{\de^n}{n!}\Big\{\frac{\eta_{n+1}\xi}{\delta (n+1)}e^{\eta_{n+1}x}+e^{\eta_n x}\Big\}\;.\label{psi}
%+\big(\eta_1\frac{\xi}{\delta}e^{\eta_{1}x}+1\big)\;.
%\\&=\s_{n=0}^{\infty}t^n\eta_{n+1}(\xi)\frac{\xi^n\gamma^{n-1}}{\delta^nn!}e^{\eta_n(\xi)x}+\s_{n=1}^\infty \frac{\xi^n\gamma^{n}}{\delta^nn!}t^{n}e^{\eta_n(\xi)x}+1\;.
\end{align}
If $\psi\ge0$ on $[0,1]\times \R_+$, then $V^\xi$ does solve the HJB equation and as we will see, it is the value function in that case. 
\begin{Satz}\label{p:0 opt}
$V^\xi$ is the value function if and only if $\xi\le \frac{\delta\sigma^2}{2\mu}$. In that case $V^\xi$ is a classical solution to the HJB equation \eqref{hjb} and an optimal strategy is constant $\xi$.
\end{Satz}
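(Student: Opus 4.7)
The plan is to reduce optimality of $V^\xi$ to positivity of the auxiliary function $\psi$ from \eqref{psi}, and then to exploit Remark~\ref{bem:prop}. Termwise differentiation of the series for $V^\xi$ shows that $V^\xi$ solves the linear parabolic PDE obtained from \eqref{hjb} by fixing $y=\xi$. Writing $\Phi(t,x):=-V_x^\xi+e^{-\delta t}(1-\gamma V^\xi)$, the bracket under the supremum in \eqref{hjb} equals $y\cdot\Phi$, which is maximised over $[0,\xi]$ at $y=\xi$ exactly when $\Phi\ge 0$. Hence $V^\xi$ is a classical solution of the full HJB \eqref{hjb} if and only if $\Phi\ge 0$ on $\R_+\times\R_+$. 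Shifting the index in the series for $V_x^\xi$ so that both series carry $e^{\eta_{n+1}x}$ yields
\[
\Phi(t,x)=e^{-\delta t}\,e^{-\de\, e^{-\delta t}}\,\psi\bigl(e^{-\delta t},x\bigr),
\]
so the problem reduces to deciding the sign of $\psi$ on $[0,1]\times\R_+$.

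For sufficiency, assume $\xi\le\delta\sigma^2/(2\mu)$. Because $\eta_{n+1}<\eta_n\le 0$ and $x\ge 0$ imply $e^{\eta_{n+1}x}\le e^{\eta_n x}$, while $\xi\eta_{n+1}/(\delta(n+1))\le 0$, the bracket in \eqref{psi} is bounded below by $e^{\eta_{n+1}x}\bigl(1+\xi\eta_{n+1}/(\delta(n+1))\bigr)$, which is $\ge 0$ by item~3 of Remark~\ref{bem:prop} applied at index $n+1\ge 1$, thanks to $\xi\le\delta\sigma^2/(2\mu)\le\delta(n+1)\sigma^2/(2\mu)$. Consequently $\psi\ge 0$ and $V^\xi$ is a classical HJB solution. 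A standard verification argument then delivers $V^\xi=V$: apply It\^o to $\tilde V(s,X_s^C,A_s^C):=U(A_s^C)+e^{-\gamma A_s^C}V^\xi(s,X_s^C)$ with $A_s^C:=\int_t^s e^{-\delta r}C_r\md r$; using the PDE the drift reduces to $e^{-\gamma A_s^C}(C_s-\xi)\Phi(s,X_s^C)\le 0$, so $\tilde V$ is a supermartingale; stopping at $\tau^C$, using $V^\xi(\cdot,0)=0$ and the uniform bound $V^\xi\le U(\xi/\delta)$ from Remark~\ref{uniform}, one obtains $V^C\le V^\xi$ with equality at $C\equiv\xi$.

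For necessity, assume $\xi>\delta\sigma^2/(2\mu)$. At $t=x=0$ only the $n=0$ term of $\psi$ survives, so $\psi(0,0)=1+\xi\eta_1/\delta<0$ by item~3 of Remark~\ref{bem:prop} with $n=1$; continuity of $\Phi$ then yields a non-empty open set $G\subset\R_+\times\R_+$ on which $\Phi<0$. Running the It\^o computation above with the modified admissible strategy $\tilde C_s:=\xi\,\one_{\{(s,X_s^{\tilde C})\notin G\}}$, the drift becomes $e^{-\gamma A_s^{\tilde C}}(-\xi\,\Phi)\,\one_G\ge 0$, with strict inequality on $G$; choosing a starting point $(t,x)$ from which $G$ is visited with positive probability gives $V^{\tilde C}(t,x)>V^\xi(t,x)$, so $V^\xi$ is not the value function. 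The main technical hurdle in both directions is the verification step: one must justify optional stopping at the possibly infinite ruin time $\tau^C$ and the exchange of limits, both of which rest on the exponential discount together with the uniform bound from Remark~\ref{uniform}.
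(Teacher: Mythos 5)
Your proposal is correct, and the sufficiency direction is essentially the paper's argument: reduce to the sign of $\psi$, show $\psi\ge 0$ via item~3 of Remark~\ref{bem:prop}, and verify optimality with It\^o's formula (the paper writes the drift identity directly rather than casting it as a supermartingale statement about $\tilde V$, but the computation is the same, and the passage $s\to\infty$ uses the same uniform-in-$x$ decay of $V^\xi$ from Remark~\ref{uniform} that you invoke). The genuine difference is in the necessity direction. The paper argues by contradiction and is indirect: if $V^\xi$ were the value function, then, since $V^\xi\in\mathcal C^{1,2}$ with bounded $x$-derivative, classical verification/regularity results (citing \cite{flso}) would force it to solve the HJB; but $\psi(0,0)=1+\eta_1\xi/\delta<0$ shows it does not. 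You instead construct an explicit improving strategy $\tilde C_s=\xi\,\one_{\{(s,X_s^{\tilde C})\notin G\}}$ on the open region $G$ where $\Phi<0$, run the same It\^o computation, and show the drift of $\tilde V$ is nonnegative and strictly positive on $G$, so $V^{\tilde C}>V^\xi$ from any starting point that reaches $G$ with positive probability. Your route is more elementary and self-contained (no appeal to an external characterisation theorem for value functions), at the cost of two small technicalities you should flag: (i) $\tilde C$ is a feedback control, so one needs existence of a solution to the corresponding SDE with discontinuous bounded drift (standard, e.g.\ by the Veretennikov/Zvonkin results or by a weak-solution argument); (ii) since $\psi(0,0)<0$ corresponds to $e^{-\delta t}\to 0$, i.e.\ large real time and $x$ near $0$, one should say explicitly that $G$ contains a nonempty open subset of $(T,T')\times(0,\epsilon)$ for some finite $T<T'$ and $\epsilon>0$, which a non-degenerate diffusion started at any $x>0$ reaches before ruin with positive probability. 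One more minor imprecision: in the limit $s\to\infty$ you cite the \emph{constant} bound $V^\xi\le U(\xi/\delta)$, but when $\tau^C=\infty$ what you actually need is the time-dependent bound $V^\xi(s,\cdot)\le U\bigl(\tfrac\xi\delta e^{-\delta s}\bigr)\to 0$, which is what Remark~\ref{uniform} really provides.
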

\begin{proof}
It is easy to check that $V^\xi$ solves the differential equation 
\[
V_t+\mu V_x+\frac{\sigma^2}{2}V_{xx}+\xi\Big(-V_x+e^{-\delta t}(1-\gamma V)\Big)=0\;.
\]

We first assume that $\xi\leq \frac{\delta\sigma^2}{2\mu}$ and show that $V^\xi$ is the value function. Then we have $\xi \leq n\frac{\delta\sigma^2}{2\mu}$ for any $n\geq 1$ and, hence Remark \ref{bem:prop} yields
 $$ \eta_n\frac{\xi}{\delta n}+1>0. $$ 
 This yields immediately for all $(t,x)\in (0,1]\times \R_+$:
\begin{align*}
\psi(t,x)\ge \s_{n=0}^{\infty}t^{n}\frac{\de^n}{n!}\Big\{\frac{\eta_{n+1}\xi}{\delta(n+1)}+1\Big\}e^{\eta_n x} \ge 0\;.
%+\big(\eta_1\frac{\xi}{\delta}e^{\eta_{1}x}+1\big)\ge 0\;.
\end{align*}
Let now $C$ be an arbitrary admissible strategy, $\tau$ its ruin time and $\hat X_u:=X^C_u$. Applying Ito's formula yields $\PP$-a.s.

\begin{align*}
&e^{-\gamma\int_t^{\tau\w s} e^{-\delta u}C_u\md u}V^\xi(\tau\w s,\hat X_{\tau\w s})
=V^\xi(t,x)+\sigma\int_t^{\tau\w s} e^{-\gamma\int_t^{y} e^{-\delta u}C_u\md u}V^\xi_x\md W_y
\\&{}\quad+\int_t^{\tau\w s} e^{-\gamma\int_t^{y} e^{-\delta u}C_u\md u}\Big\{V^\xi_t+(\mu-C_y)V^\xi_x+\frac{\sigma^2}2V^\xi_{xx}-\gamma C_y e^{-\delta y} V^\xi\Big\}\md y\;.
\end{align*}
Since $V_x^\xi$ is bounded, the stochastic integral above is a martingale with expectation zero. For the second integral one obtains using the differential equation for $V^\xi$:
\begin{align*}
\int_t^{\tau\w s} e^{-\gamma\int_t^{y} e^{-\delta u}C_u\md u}\Big\{V^\xi_t+(\mu-C_y)V^\xi_x+\frac{\sigma^2}2V^\xi_{xx}-\gamma C_y e^{-\delta y} V^\xi\Big\}\md y
\\= \int_t^{\tau\w s} e^{-\gamma\int_t^{y} e^{-\delta u}C_u\md u}\Big\{\big(C_y-\xi\big)\psi\big(e^{-\delta y},\hat X_y\big)-C_y e^{-\delta y}\Big\}\md y\;.
\end{align*}
Building the expectations on the both sides and letting $s\to\infty$, we obtain by interchanging limit and expectation (due to the bounded convergence theorem):
\begin{align*}
0&=V^\xi(t,x)+\mE_{(t,x)}\Big[\int_t^{\tau} e^{-\gamma\int_t^{y} e^{-\delta u}C_u\md u}\Big\{\big(C_y-\xi\big)\psi\big(e^{-\delta y},\hat X_y\big)-C_y e^{-\delta y}\Big\}\md y\Big]
\\&=V^\xi(t,x)+ \mE_{(t,x)}\Big[\int_t^{\tau} e^{-\gamma\int_t^{y} e^{-\delta u}C_u\md u}\big(C_y-\xi\big)\psi\big(e^{-\delta y},\hat X_y\big)\md y\Big]-V^C(t,x)\;.
\end{align*}
Since $C_u\le \xi$ and $\psi\ge 0$, the expectation above is non-positive, giving $V^C(t,x)\le V^\xi(t,x)$ for all admissible strategies $C$. Therefore, $V^\xi$ is the value function. 
\\Assume now $\xi> \frac{\delta\sigma^2}{2\mu}$ and we assume for contradiction that $V^\xi$ is the value function. Then we have $\psi(0,0)=1+\eta_1 \xi/\delta<0$. It means in particular, that the function $\psi$ is negative also for some $(t,x)\in(0,1]\times\R_+$. Consequently, $V^\xi$ does not solve the HJB equation \eqref{hjb}. Moreover, $V^\xi$ is smooth enough and has bounded $x$-derivative. Thus, classical verifaction results, cf.\ \cite{flso}, yield that $V^\xi$ solves the HJB equation. A contradiction.
\end{proof}
In the following, we assume $\xi>\frac{\delta\sigma^2}{2\mu}$.

\subsection{The goodness of the strategy $\xi$.}\label{s:xi good}
We now provide an estimate on the goodness of the constant payout strategy which relies only on the performance of the chosen strategy $\xi$ and on deterministic constants. Recall from \eqref{eta} and \eqref{tz} that
\begin{align*}
&\eta_n = \frac{(\xi-\mu)-\sqrt{(\xi-\mu)^2+2n\delta\sigma^2}}{\sigma^2}, \\ &\theta_n = \frac{-\mu+\sqrt{\mu^2+2n\delta\sigma^2}}{\sigma^2}, 
\quad \zeta_n = \frac{-\mu-\sqrt{\mu^2+2n\delta\sigma^2}}{\sigma^2}.
   \end{align*}
\begin{Satz}\label{p:Goodness constant}
 Let $t,x\geq 0$. Then we have
\begin{align*}
V(t,x) &\le V^\xi(t,x) 
\\&\quad {}+\xi e^{-\delta t-\Delta e^{-\delta t}}\s_{n=0}^{\infty} e^{-\delta t n} \frac{\de^n}{n!} \int_0^{\infty} \left(\frac{-\eta_{n+1}\xi}{\delta(n+1)}e^{\eta_{n+1}y}-e^{\eta_{n}y}\right)^+f_{n+1}(x,y)\md y,
\end{align*}
 where
     \begin{align*}
      f_n(x,y) &:= \frac{2\left(e^{\theta_n(x \wedge y)}-e^{\zeta_n(x\wedge y)}\right)e^{\eta_{n}(x-y)^+}}{\sigma^2\left((\theta_n-\eta_n)e^{y\theta_n}-(\zeta_n-\eta_n)e^{y\zeta_n}\right)},\quad y\geq 0.
  \end{align*}
\end{Satz}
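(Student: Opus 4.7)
The plan is to mimic the It\^o-based representation from the proof of the preceding proposition, combining it with a series estimate on $\psi^-$ and a uniform occupation bound from the appendix. For an arbitrary admissible strategy $C$ with ruin time $\tau:=\tau^C$ and $\hat X_y:=X^C_y$, I apply It\^o's formula to $e^{-\gamma\int_t^y e^{-\delta u}C_u\md u}V^\xi(y,\hat X_y)$ and use that $V^\xi$ solves the auxiliary linear equation appearing in that proof. Letting the upper cutoff tend to $\infty$ (the boundary term vanishes because $V^\xi(\tau,\hat X_\tau)=0$ at ruin and $V^\xi(y,\cdot)$ is bounded uniformly by $U(\xi e^{-\delta y}/\delta)\to 0$ as $y\to\infty$), one obtains
\[
V^C(t,x)=V^\xi(t,x)+\mE_{(t,x)}\Big[\int_t^{\tau}e^{-\gamma\int_t^y e^{-\delta u}C_u\md u}(C_y-\xi)\Psi(y,\hat X_y)\md y\Big],
\]
where $\Psi(y,x):=-V^\xi_x(y,x)+e^{-\delta y}(1-\gamma V^\xi(y,x))=e^{-\delta y-\de e^{-\delta y}}\psi(e^{-\delta y},x)$. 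Since $0\le C_y\le\xi$, the integrand is at most $\xi\,e^{-\gamma\int_t^y e^{-\delta u}C_u\md u}\Psi^-$ pointwise.

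The algebraic heart of the argument is the identity
\[
e^{-\gamma\int_t^y e^{-\delta u}C_u\md u-\de e^{-\delta y}}=e^{-\de e^{-\delta t}}F(y),\quad F(y):=e^{\gamma\int_t^y(\xi-C_u)e^{-\delta u}\md u}\ge 1,
\]
which follows from $\de e^{-\delta t}-\de e^{-\delta y}=\gamma\xi\int_t^y e^{-\delta u}\md u$ (using $\de=\xi\gamma/\delta$); it lets me pull the target constant $e^{-\de e^{-\delta t}}$ out of the expectation, at the price of the non-negative weight $F$. Next, the triangle-type inequality $\bigl(\s_n a_n\bigr)^-\le \s_n a_n^-$ applied to the series \eqref{psi} gives
\[
\psi^-(e^{-\delta y},x)\le\s_{n=0}^{\infty}\frac{(\de e^{-\delta y})^n}{n!}g_n(x),\quad g_n(x):=\Big(\tfrac{-\eta_{n+1}\xi}{\delta(n+1)}e^{\eta_{n+1}x}-e^{\eta_n x}\Big)^+\!.
\]
Substituting this and splitting $e^{-\delta y(n+1)}=e^{-\delta t(n+1)}e^{-\delta(y-t)(n+1)}$ produces the estimate
\[
V^C-V^\xi\le \xi e^{-\delta t-\de e^{-\delta t}}\s_{n=0}^{\infty}e^{-\delta tn}\tfrac{\de^n}{n!}\mE_{(t,x)}\Big[\int_t^{\tau}F(y)e^{-\delta(y-t)(n+1)}g_n(\hat X_y)\md y\Big],
\]
whose prefactors already match those in the stated bound.

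What remains---and is, I expect, the main obstacle---is the uniform (in the admissible strategy $C$) occupation-type estimate
\[
\sup_C\mE_{(t,x)}\Big[\int_t^{\tau^C}F(y)e^{-\delta(y-t)(n+1)}g_n(X^C_y)\md y\Big]\le\int_0^{\infty}g_n(y)f_{n+1}(x,y)\md y,
\]
which the appendix is designed to establish by adapting the sharp density-estimate technique of Ba\~nos and Kr\"uhner \cite{bakr}. The simultaneous presence of the $(n+1)\delta$-discount and of the strategy-dependent exponential weight $F$---whose exponent $\gamma\int_t^y(\xi-C_u)e^{-\delta u}\md u$ interpolates between the fully-controlled drift $\mu-\xi$ and the uncontrolled drift $\mu$ as $C$ varies over admissible strategies---is exactly what forces the worst-case bound $f_n$ to blend the roots $\theta_n,\zeta_n$ of the uncontrolled-process characteristic quadratic with the root $\eta_n$ of the maximally-controlled one. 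Once this occupation bound is in hand, taking the supremum over admissible $C$ and using $V=\sup_C V^C$ delivers the claimed inequality.
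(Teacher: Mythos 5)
Your overall strategy is the same as the paper's: apply It\^o's formula to $e^{-\gamma\int_t^y e^{-\delta u}C_u\md u}V^\xi(y,\hat X_y)$, reduce the comparison to the error term $\mE[\int_t^\tau e^{-\gamma\int\cdots}(C_y-\xi)\Psi(y,\hat X_y)\md y]$, estimate via $(C_y-\xi)\Psi\leq\xi\Psi^-$ and the power-series bound for $\psi^-$, and finish with the occupation bound from the appendix. Your derivation of the representation formula, the boundary observation $V^\xi(t,0)=0$, and the series estimate on $\psi^-$ are all correct and match the paper.

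The gap is the weighted occupation estimate that you explicitly defer, namely
\begin{equation*}
\sup_C\mE_{(t,x)}\Big[\int_t^{\tau^C}F(y)\,e^{-\delta(n+1)(y-t)}g_n(X^C_y)\md y\Big]\le\int_0^{\infty}g_n(y)f_{n+1}(x,y)\md y.
\end{equation*}
This is not what Theorem \ref{t:occupation bound} gives: that theorem controls the plain discounted occupation $\mE_x[\int_0^\tau e^{-\delta' s}\psi(X_s)\md s]$ without any path-dependent weight. Worse, the bound $f_{n+1}$ is shown in the appendix to be attained with equality for the worst-case drift $\eta^*$, so inserting a multiplicative weight $F(y)\ge 1$, which is strictly bigger than $1$ precisely when the controlled drift deviates from its lower endpoint, cannot be absorbed at no cost; the claimed weighted inequality is false as stated. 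The only uniform control available is $F(y)\le e^{\Delta e^{-\delta t}}$, and plugging this in exactly cancels the prefactor $e^{-\Delta e^{-\delta t}}$ you extracted via the identity, leaving you with a valid but weaker bound that lacks the factor $e^{-\Delta e^{-\delta t}}$ appearing in the Proposition. The paper's own proof does not introduce $F$ at all; it bounds the combined weight $e^{-\gamma\int\cdots}e^{-\Delta e^{-\delta r}}$ by a constant before invoking the unweighted Theorem \ref{t:occupation bound}. (Incidentally, that step also deserves scrutiny: the paper replaces $e^{-\Delta e^{-\delta r}}$ by $e^{-\Delta e^{-\delta t}}$, but for $r\ge t$ one has $e^{-\Delta e^{-\delta r}}\ge e^{-\Delta e^{-\delta t}}$, so the direction of the intermediate inequality appears reversed and only the weaker constant $1$ can be justified along that route.)
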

\begin{proof}
We know that the return function $V^\xi\in \mathcal C^{1,2}$. Let $C$ be an arbitrary admissible strategy. Then, using Ito's formula for $s>t$ under $\PP$:
\begin{align*}
&e^{-\gamma\int_t^{s\w\tau^C} e^{-\delta u}C_u\md u}V^\xi(s\w\tau^C,X_{s\w\tau^C}^C)
\\&=V^\xi(t,x)+\int_t^{s\w\tau^C} e^{-\gamma\int_t^{r} e^{-\delta u}C_u\md u}\Big\{V^\xi_t+(\mu-C_r) V^\xi_x+\frac{\sigma^2}2 V^\xi_{xx}-\gamma e^{-\delta r}C_rV^\xi\Big\}\md r
\\&\quad {}+\sigma\int_t^{s\w\tau^C} e^{-\gamma\int_t^{r} e^{-\delta u}C_u\md u} V^\xi_x\md W_r\;.
\end{align*}
Using the differential equation for $V^\xi$, one obtains as in the last proof:
\begin{align*}
e^{-\gamma\int_t^{s\w\tau^C} e^{-\delta u}C_u\md u}&V^\xi(s\w\tau^C,X_{s\w\tau^C}^C)
%\\&= V^\xi(t,x)+\xi\int_t^{s\w\tau^C} e^{-\gamma\int_t^{r} e^{-\delta u}C_u\md u}\Big\{V^\xi_x-e^{-\delta r}\big(1-\gamma  V^\xi\big)\Big\}\md r
%\\&\quad {}-\int_t^{s\w\tau^C} e^{-\gamma\int_t^{r} e^{-\delta u}C_u\md u}C_r\Big\{V^\xi_x+\gamma e^{-\delta r} V^\xi\Big\}\md r+\sigma\int_t^{s\w\tau^C} e^{-\gamma\int_t^{r} C_u\md u} V^\xi_x\md W_r
\\&= V^\xi(t,x)+\int_t^{s\w\tau^C} e^{-\gamma\int_t^{r} e^{-\delta u}C_u\md u}(C_r-\xi)\psi(e^{-\delta r},X^C_r)\md r
\\&\quad {}-\int_t^{s\w\tau^C} e^{-\gamma\int_t^{r} C_u\md u}C_r e^{-\delta r}\md r+\sigma\int_t^{s\w\tau^C} e^{-\gamma\int_t^{r} C_u\md u} V^\xi_x\md W_r
\end{align*}
Building the $\PP$-expectations, letting $s\to\infty$ and rearranging the terms, one has
\begin{align*}
V^C(t,x)=V^\xi(t,x)+\mE_{(t,x)}\Big[\int_t^{\tau^C} e^{-\gamma\int_t^{r} C_u\md u}(C_r-\xi)\psi(e^{-\delta r},X^C_r)\md r\Big]\;.
\end{align*}
Our goal is to find a $C$-independent estimate for the expectation on the rhs.\ above, in order to gain a bound for the difference $V(t,x)-V^\xi(t,x)$. We have
\begin{align*}
&\mE_{(t,x)}\Big[\int_t^{\tau^C} e^{-\gamma\int_t^{r} C_u\md u}(C_r-\xi)\psi(e^{-\delta r},X^C_r)\md r\Big]
\\ &\le -\xi\mE_{(t,x)}\Big[\int_t^{\tau^C} e^{-\gamma\int_t^{r} C_u\md u}\psi(e^{-\delta r},X^C_r)\one_{\{\psi(e^{-\delta r},X^C_r)<0\}}\md r\Big]\; 
\\
 &\le -\xi\mE_{(t,x)}\Big[\int_t^{\tau^C} \exp(-\delta r)\exp(-\Delta e^{-\delta r})\psi(e^{-\delta r},X^C_r)\one_{\{\psi(e^{-\delta r},X^C_r)<0\}}\md r\Big]\; \\
 &\le -\xi e^{-\Delta e^{-\delta t}}\mE_{(t,x)}\Big[\int_t^{\tau^C} \exp(-\delta r)\psi(e^{-\delta r},X^C_r)\one_{\{\psi(e^{-\delta r},X^C_r)<0\}}\md r\Big]\; \\
 & \le \xi e^{-\Delta e^{-\delta t}}\s_{n=0}^{\infty} e^{-\delta t (n+1)} \frac{\de^n}{n!}\mE_{(t,x)}\Big[ \int_t^{\tau^C} e^{-\delta (r-t)(n+1)}\left(\frac{-\eta_{n+1}\xi}{\delta(n+1)}e^{\eta_{n+1}X_r^C}-e^{\eta_{n}X_r^C}\right)^+{\rm d} r\Big]\; \\
 & \le \xi e^{-\Delta e^{-\delta t}-\delta t}\s_{n=0}^{\infty} e^{-\delta t n} \frac{\de^n}{n!} \int_0^{\infty} \left(\frac{-\eta_{n+1}\xi}{\delta(n+1)}e^{y\eta_{n+1}}-e^{\eta_{n}y}\right)^+f_{n+1}(x,y)\md y\; 
\end{align*}
where the last inequality follows from Theorem \ref{t:occupation bound}.
\end{proof}

\section{Goodness of Constant Barrier Strategies}
\noindent
Shreve et al.\ \cite{Shreve} and Asmussen and Taksar \cite{astak} considered the problem of dividend maximization for a surplus described by a Bownian motion with drift. The optimal strategy there turned out to be a barrier strategy with a constant barrier. 
\\Let $q\in\R_+$ and $C$ be given by $C_s=\xi\one_{\{X^C_s>q\}}$, i.e.\ $C$ is a barrier strategy with a constant barrier $q$ and ruin time $\tau^C=\inf\{s\ge 0:\;X^C_s=0\}$. The corresponding return function fulfils due to the Markov-property of $X^C$
\[
V^C(t,x)=\frac1\gamma -\frac 1\gamma \mE_{x}\Big[e^{-\gamma \int_t^{t+\tau^C}e^{-\delta s} C_s\md s}\Big]\;.
\] 
Note that for every $a>0$ we have
\begin{align*}
\mE_{x}\Big[e^{a \int_t^{t+\tau^C}e^{-\delta s} C_s\md s}\Big]&\le e^{a \int_t^{\infty}e^{-\delta s} \xi\md s}=e^{\frac{a\xi}\delta e^{-\delta t}}<\infty\;.
\end{align*}
It means, the moment generating function of $\int_t^{t+\tau^C}e^{-\delta s} C_s\md s$ is infinitely often differentiable and all moments of $\int_t^{t+\tau^C}e^{-\delta s} C_s\md s$ exist. We define  
\begin{align*}
%&\tau_q^C:=\inf\{s\ge 0:\;X^C_s=0,\; X_0=q\},
&M_n(q):=\E_q\Big[\Big(\Delta-\gamma \int_0^{\tau^C}e^{-\delta s} C_s\md s\Big)^n\Big]>0,
%&Z_n(q):=\E\Big[\Big(1-\delta \int_0^{\tau_q^C}e^{-\delta s} 1_{\{X_s>q\}}\md s\Big)^n\Big]\in(0,1),
%\\&\tilde M_n(q):=(-1)^n\mE\Big[\Big(\gamma \int_0^{\tau_q^C}e^{-\delta s} C_s\md s\Big)^n\Big]
\\&\tau^{q,\xi}:= \inf\{s\ge 0:\;X_s^\xi=q\},
\\&\tau^{q,0}:= \inf\{s\ge 0:\;X_s^0\notin(0,q)\}.
\end{align*}
Then, for $F(t,x):=V^C(t,x)$, $x>q$, and for $G(t,x):=V^C(t,x)$, $x<q$, it holds:
\begin{align}
F(t,x&)=\frac1\gamma -\frac 1\gamma \mE_{x}\Big[e^{-\gamma\xi\int_t^{t+\tau^{q,\xi}} e^{-\delta s}\md s-\gamma \int_{t+\tau^{q,\xi}}^{t+\tau^C}e^{-\delta s} C_s\md s}\Big]\nonumber
\\ &= \frac1\gamma -\frac 1\gamma \mE_{x}\Big[\exp\big(e^{-\delta t}(-\Delta (1-e^{-\delta \tau^{q,\xi}})-\gamma \int_{\tau^{q,\xi}}^{\tau^C}e^{-\delta s} C_s\md s)\big)\Big]\nonumber
\\ &= \frac1\gamma -\frac 1\gamma e^{-\Delta e^{-\delta t}}\mE_{x}\Big[\exp\big(e^{-\delta t}e^{-\delta \tau^{q,\xi}}(\Delta-\gamma \int_{0}^{\tau^C-\tau^{q,\xi}}e^{-\delta s} C_s\md s)\big)\Big]\nonumber
%\\F(t,x)&=\frac1\gamma -\frac 1\gamma \mE_{x}\Big[e^{-\gamma\xi\int_t^{t+\tau^{q,\xi}} e^{-\delta s}\md s-\gamma \int_{t+\tau^{q,\xi}}^{t+\tau^C}e^{-\delta s} C_s\md s}\Big]\nonumber
%\\&=\frac 1\gamma -\frac 1\gamma \mE_{x}\Big[e^{-\Delta e^{-\delta t}(1-e^{-\delta \tau^{q,\xi}})-\gamma e^{-\delta(t+\tau^{q,\xi})} \int_{0}^{\tau_q^C}e^{-\delta s} C_s\md s}\Big]\nonumber
\\&=\frac 1\gamma-\frac 1\gamma e^{-\Delta e^{-\delta t}}-\frac 1\gamma e^{-\Delta e^{-\delta t}}\s_{n=1}^\infty \frac{e^{-\delta tn}}{n!}\mE_{x}\big[e^{-\delta n\tau^{q,\xi}}\big]\mE_q\Big[\Big(\Delta-\gamma\int_0^{\tau^C}e^{-\delta s} C_s\md s\Big)^n\Big]\nonumber
\\&= \frac 1\gamma-\frac 1\gamma e^{-\Delta e^{-\delta t}}-\frac 1\gamma e^{-\Delta e^{-\delta t}}\s_{n=1}^\infty \frac{e^{-\delta tn}}{n!}e^{\eta_n(x-q)} M_n(q)\nonumber
\\&= -\frac 1\gamma \s_{n=1}^\infty \frac{e^{-\delta tn}}{n!}\s_{k=0}^n\binom{n}{k}(-\Delta)^{n-k} M_k(q)e^{\eta_k(x-q)}\;.\label{F2}
\\\vspace{0.5cm}\nonumber
\\
G(t,x&)=\mE_{x}\big[F\big(t+\tau^{q,0},q\big);X^0_{\tau^{q,0}}=q\big]\nonumber
%\frac1\gamma -\frac 1\gamma \mE\Big[e^{-\gamma \int_{t+\tau^{q,0}}^{t+\tau_x^{q,0}+\tau_q^C}e^{-\delta s} c_s\md s};X^0_{\tau_x^{q,0}}=q\Big]\nonumber
%\\&=\frac 1\gamma -\frac 1\gamma \mE\Big[e^{-\gamma e^{-\delta(t+\tau_x^{q,0})} \int_{0}^{\tau_q^C}e^{-\delta s} c_s\md s};X^0_{\tau_x^{q,0}}=q\Big]\nonumber
%\\&=-\frac 1\gamma\s_{n=1}^\infty \frac{e^{-\delta tn}}{n!}\mE\big[e^{-\delta n \tau_x^{q,0}};X^0_{\tau_x^{q,0}}=q\big]\mE\Big[\Big(-\gamma\int_0^{\tau_q^C}e^{-\delta s} c_s\md s\Big)^n\Big]\nonumber
%\\&=-\frac 1\gamma\s_{n=1}^\infty \frac{e^{-\delta tn}}{n!}\cdot e^{-\frac\mu{\sigma^2} (x-q)}\frac{\sinh(\varrho_n x)}{\sinh(\varrho_n q)}\tilde M_n(q)\label{G2}
\\&=-\frac 1\gamma\s_{n=1}^\infty \frac{e^{-\delta tn}}{n!}\cdot\frac{e^{\theta_nx}-e^{\zeta_n x}}{e^{\theta_nq}-e^{\zeta_n q}}\s_{k=0}^n\binom{n}{k}(-\Delta)^{n-k}M_k(q)\;. \label{G1} 
%\&=-\frac 1\gamma\s_{n=1}^\infty \frac{e^{-\delta tn}}{n!}\cdot\frac{e^{\theta_nx}-e^{\zeta_n x}}{e^{\theta_nq}-e^{\zeta_n q}}(-\Delta)^nZ_n(q)\;. 
\end{align}
where, for the fourth equality, we developed the first exponential function in the expectation into its power series and used the Markov property to see that the $\mP_{0,x}$-law given $\mathcal F_{\tau^{q,\xi}}$ of $\tau^C-\tau^{q,\xi}$ equals the $\mP_{0,q}$-law of $\tau^C$.

\begin{figure}[t]
\includegraphics[scale=0.6, bb = -40 0 200 200]{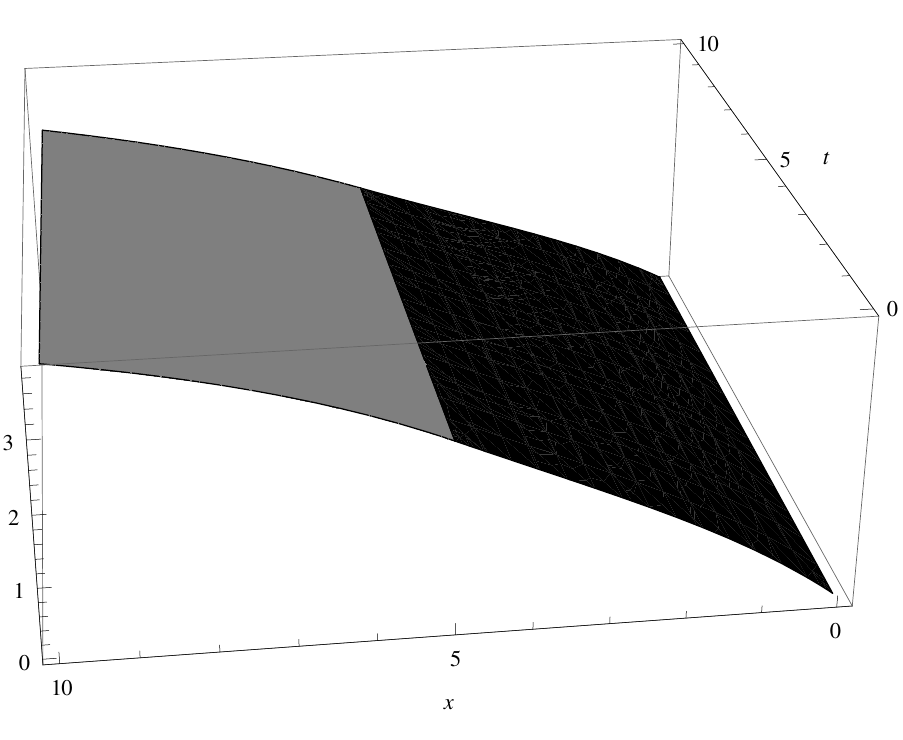}
\includegraphics[scale=0.6, bb = -140 0 100 200]{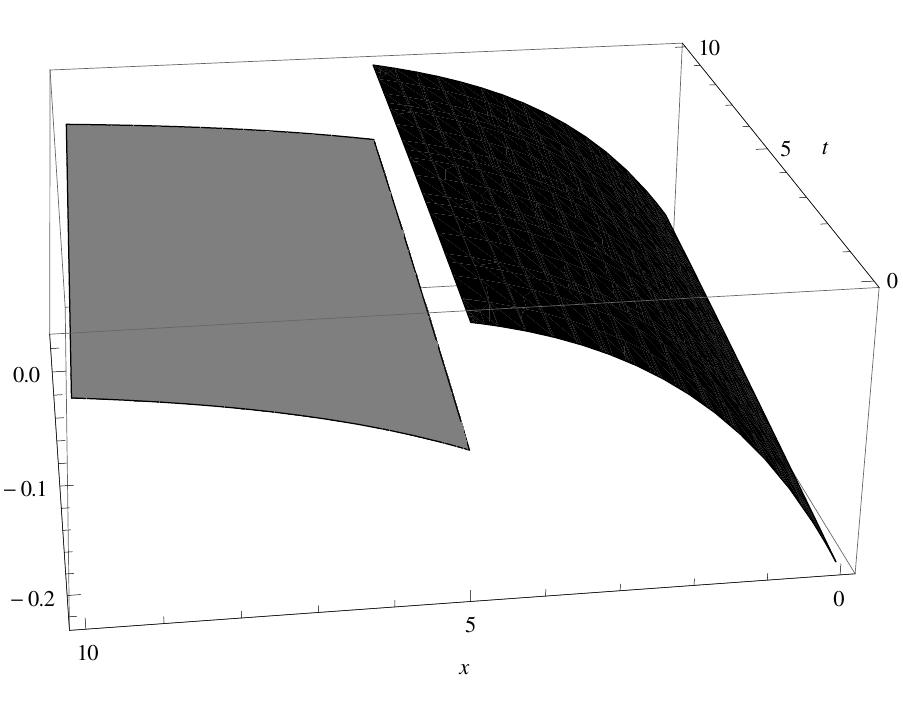}
\caption{The return function corresponding to a $5$-barrier strategy and its second derivative with respect to $x$.\label{const}}
\end{figure}
In order to analyse the performance function of a barrier strategy we will develop the  performance function into integer powers of $(e^{-\delta t})$ with $x$-dependent coefficients and truncate at some $N$. This will result in an approximation for the performance function which is much easier to handle but this incurs an additional truncation error. Inspecting Equations \eqref{F2}, \eqref{G1} motivates the approximations
 \begin{align*}
   F^N(t,x) &= \s_{n=1}^N e^{-\delta tn} \s_{k=0}^n A_{n,k} e^{\eta_k (x-q)}, \\
   G^N(t,x) &:= \s_{n=1}^N D_n e^{-\delta tn}\frac{e^{\theta_nx}-e^{\zeta_n x}}{e^{\theta_nq}-e^{\zeta_n q}},
\end{align*}  
for $x,t\geq 0$ where $\eta_0:=0$. In order to achieve a $\mathcal C^{(1,1)}$ fit we choose $D_n := \s_{k=0}^nA_{n,k}$ and 
 $$ A_{n,n} := \frac{\s_{k=0}^{n-1}(\nu_n-\eta_k)A_{n,k}}{\eta_n-\nu_n},\quad \nu_n := \frac{\theta_ne^{\theta_nq}-\zeta_ne^{\zeta_n q}}{e^{\theta_nq}-e^{\zeta_n q}}.$$
 This leaves the choice for $A_{n,0},\dots,A_{n,k-1}$ open which we now motivate by inspecting the dynamics equation for $F,G$ which should be:
\begin{align*}
&G_t(t,x)+\mu G_x(t,x)+\frac{\sigma^2}{2}G_{xx}(t,x)  = 0, \\
&F_t(t,x)+\mu F_x(t,x)+\frac{\sigma^2}{2}F_{xx}(t,x) = \xi\Big(F_x(t,x)+e^{-\delta t}(\gamma F(t,x)-1)\Big)
\end{align*}
with boundary condition $G(t,0) = 0$ for $t,x\geq 0$. 
\\It is easy to verify that $G^N(t,0) = 0$ and $G^N_t(t,x)+\mu G^N_x(t,x)+\frac{\sigma^2}{2}G^N_{xx}(t,x) = 0$. However, since $H_k(x):=e^{\eta_k x}$ solves the equation
$$ \delta k H_k(x) = (\mu-\xi) \partial_xH_k(x)+\frac{\sigma^2}{2}\partial_x^2H_k(x) $$
we find that
\begin{align*}
F^N_t(t,x)+(\mu-\xi) F^N_x(t,x)+\frac{\sigma^2}{2}F^N_{xx}(t,x) &= \s_{n=1}^N e^{-\delta t n} \s_{k=0}^{n-1} \delta(k-n)A_{n,k}e^{\eta_k (x-q)}, 
\\e^{-\delta t}\xi(\gamma F^N(t,x)-1) &= -e^{-\delta t}\xi + \s_{n=2}^{N+1} e^{-\delta tn}\s_{k=0}^{n-1}\gamma\xi A_{n-1,k}e^{\eta_k(x-q)}
\end{align*}   
We will treat the term $e^{-\delta t(N+1)}\xi\gamma\s_{k=0}^NA_{N,k}e^{\eta_k(x-q)}$ as an error term and otherwise equate the two expressions above. This allows to define the remaining coefficients which are given by:
\begin{align*}
  A_{n,k} &:= \frac{\gamma\xi A_{n-1,k}}{\delta(k-n)} = (-\frac{\gamma\xi}{\delta})^{n-k}\frac{A_{k,k}}{(n-k)!} = (-\Delta)^{n-k}\frac{A_{k,k}}{(n-k)!}, \\
  %A_{1,0} &:= \frac{\xi}{\delta}, \\
  A_{n,0} &:= \Big(-\frac{\gamma\xi}{\delta}\Big)^{n-1}\frac{\xi}{\delta n!} = \frac{(-\gamma)^{n-1}\xi^n}{\delta^n n!} = \frac{(-\Delta)^n}{-\gamma n!}
\end{align*}
for $n\geq k\geq 1$ and the last line also for $n=0$.
\\
The following lemma shows that $F^N$ solves ``almost'' the same equation as $F$ is thought to solve. Instead of being zero we see an error term which converges for time to infinity faster than $e^{-\delta tN}$.
\begin{Lem}\label{l:almost HJB}
  We have
 \begin{align*}
 G^N_t(t,x)+\mu G^N_x(t,x)+\frac{\sigma^2}{2}G^N_{xx}(t,x)  &= 0, \\
 F^N_t(t,x)+\mu F^N_x(t,x)+\frac{\sigma^2}{2}F^N_{xx}(t,x) +\xi\psi^N(e^{-\delta t},x) &= -e^{-\delta t(N+1)}\xi\gamma\s_{k=0}^NA_{N,k}e^{\eta_k(x-q)},
\end{align*}  
for any $t\geq 0,x\geq q$ where
 $$ \psi^N(e^{-\delta t},x) := -F_x^N(t,x)+e^{-\delta t}(1-\gamma F^N(t,x)). $$
\end{Lem}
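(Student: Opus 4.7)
The plan is to verify both identities by direct computation, relying on the characteristic equations satisfied by the exponents $\theta_n,\zeta_n,\eta_n$ together with the defining recursion of the coefficients $A_{n,k}$. For the identity involving $G^N$, the crucial observation is that $\theta_n$ and $\zeta_n$ are by construction the two roots of the quadratic $\mu r+\frac{\sigma^2}{2}r^2=\delta n$, so that each of $e^{\theta_n x}$ and $e^{\zeta_n x}$ satisfies $\mu H'(x)+\frac{\sigma^2}{2}H''(x)=\delta n\,H(x)$. Differentiating the $n$-th summand of $G^N$ in time brings out the factor $-\delta n$, which cancels exactly against the $\delta n$ produced by the spatial operator. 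Summing over $n$ yields zero, and the boundary value $G^N(t,0)=0$ is immediate from the form of the basis functions.

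For the $F^N$ identity I would use the analogous relation $(\mu-\xi)\eta_k+\frac{\sigma^2}{2}\eta_k^2=\delta k$ (recorded just before the lemma) to compute
\[
F^N_t+(\mu-\xi)F^N_x+\frac{\sigma^2}{2}F^N_{xx}=\sum_{n=1}^N e^{-\delta tn}\sum_{k=0}^{n-1}\delta(k-n)A_{n,k}e^{\eta_k(x-q)},
\]
the $k=n$ contributions cancelling. I would then expand $\xi\psi^N(e^{-\delta t},x)=-\xi F^N_x+e^{-\delta t}\xi-e^{-\delta t}\gamma\xi F^N$, so that adding $\xi\psi^N$ to the left-hand side of the lemma replaces the operator $\mu\partial_x$ by $(\mu-\xi)\partial_x$ and produces the extra terms $e^{-\delta t}\xi-e^{-\delta t}\gamma\xi F^N$. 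A shift of summation index $n\mapsto n+1$ in the expansion of $e^{-\delta t}\gamma\xi F^N$ aligns all summands to common powers of $e^{-\delta t}$, reducing the entire left-hand side to a single double sum in $n$ and $k$.

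Collecting the coefficient of $e^{-\delta tn}e^{\eta_k(x-q)}$ for $2\le n\le N$ and $0\le k\le n-1$ yields $\delta(k-n)A_{n,k}-\gamma\xi A_{n-1,k}$, which vanishes by the defining recursion $A_{n,k}=\gamma\xi A_{n-1,k}/(\delta(k-n))$. At order $n=1$ the only surviving contribution is $-\delta A_{1,0}+\xi$, which equals zero thanks to the explicit value $A_{1,0}=\xi/\delta$ obtained from the definitions of $A_{n,0}$ and $\Delta=\xi\gamma/\delta$. The single term not matched by the recursion is produced by $m=N+1$ in the shifted expansion of $e^{-\delta t}\gamma\xi F^N$, and yields exactly $-e^{-\delta t(N+1)}\gamma\xi\sum_{k=0}^N A_{N,k}e^{\eta_k(x-q)}$, as claimed. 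Note that the computation never invokes the special formulas for $A_{n,n}$ or $D_n$; those are reserved for the $\mathcal{C}^{1,1}$ matching of $F^N$ and $G^N$ across $x=q$ and play no role in this lemma. The main obstacle is purely bookkeeping: tracking the index shift between $F^N$ and $e^{-\delta t}\gamma\xi F^N$, and isolating the two extremal orders $n=1$ and $n=N+1$ where cancellation does not follow directly from the telescoping recursion.
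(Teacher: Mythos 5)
Your proof is correct and matches the paper's approach: the paper's proof is a one-line "insert the definitions," and the computation you spell out (characteristic equations for $\theta_n,\zeta_n,\eta_k$, index shift in the $e^{-\delta t}\gamma\xi F^N$ expansion, telescoping via the recursion $A_{n,k}=\gamma\xi A_{n-1,k}/(\delta(k-n))$ with the extremal orders $n=1$ and $n=N+1$ handled separately) is exactly the derivation the paper lays out in the paragraph preceding the lemma to motivate the definitions of $A_{n,k}$. Your observation that $A_{n,n}$ and $D_n$ play no role here and only enter the $\mathcal C^{1,1}$ fit at $x=q$ is accurate and a useful clarification.
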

\begin{proof}
The claim follows by inserting the definitions of $G^N$ and $F^N$.
\end{proof}
We define 
\begin{align*}
&V^N(t,x) := \one_{\{x\geq q\}} F^N(t,x) + \one_{\{x<q\}} G^N(t,x),
\\&\psi^N(e^{-\delta t},x) := -V_x^N(t,x)+e^{-\delta t}(1-\gamma V^N(t,x))
\end{align*}
for any $t,x\geq 0$. We now want to compare the approximate performance function $V^N$ for the barrier strategy with level $q$ to the unknown value function. We will employ the same method as in Section \ref{s:xi good} and rely on the occupation bounds from Theorem \ref{t:occupation bound}. We have in mind that $V^N \approx V^C \leq V$.  The three error terms appearing on the right-hand side of the following proposition are in this order the error for behaving suboptimal above the barrier, the error for behaving suboptimal below the barrier and the approximation error.
\begin{Prop}\label{p:goodness barrier}
We have
\begin{align*}
&V(t,x) \le V^N(t,x) 
\\& {}+ \s_{n=1}^{N+1} e^{-\delta tn} \xi\Bigg[\int_q^\infty \bigg(\one_{\{n\neq N+1\}}\s_{k=0}^n\eta_k A_{n,k}e^{\eta_k(y-q)}+\gamma\s_{k=0}^{n-1}A_{n-1,k}e^{\eta_l(y-q)}\bigg)^+ f_n(x,y) \md y \\
  & {}+ \int_0^q \left(-D_n\frac{\theta_ne^{\theta_n y}-\zeta_ne^{\zeta_ny}}{e^{\theta_n q}-e^{\zeta_nq}} +\Big(\one_{n=1}-\gamma\one_{n\neq 1}D_n\frac{e^{\theta_n y}-e^{\zeta_ny}}{e^{\theta_n q}-e^{\zeta_nq}}\Big)\right)^+ f_n(x,y) \md y \Bigg] \\
 &{} + e^{-\delta t(N+1)}\xi\gamma \int_q^\infty \s_{k=0}^N|A_{N,k}|e^{\eta_k(y-q)} f_{N+1}(x,y)\md y
\end{align*}
for any $t,x\ge 0$ where $f_k$ are defined in Proposition \ref{p:Goodness constant}.
\end{Prop}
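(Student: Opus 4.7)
The plan is to adapt the proof of Proposition \ref{p:Goodness constant}, with $V^\xi$ replaced by the approximation $V^N$ and the exact differential equation for $V^\xi$ replaced by Lemma \ref{l:almost HJB}. Let $C$ be an arbitrary admissible strategy with ex-dividend process $X^C$ and ruin time $\tau^C$. First I would apply It\^o's formula to $e^{-\gamma\int_t^{s\w\tau^C}e^{-\delta u}C_u\md u}\,V^N(s\w\tau^C,X^C_{s\w\tau^C})$. By construction $V^N$ is $\mathcal C^1$ in $x$ (the $\mathcal C^{(1,1)}$ matching at the barrier $q$) and piecewise $\mathcal C^2$, so the It\^o--Tanaka formula produces no local time at $q$ and the standard formula applies with $V^N_{xx}$ understood piecewise.

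Next, I would rewrite the drift using Lemma \ref{l:almost HJB}. Above the barrier the operator $\partial_t+\mu\partial_x+\tfrac{\sigma^2}{2}\partial_x^2$ applied to $V^N$ equals $-\xi\psi^N(e^{-\delta r},X^C_r)-e^{-\delta r(N+1)}\xi\gamma\s_{k=0}^N A_{N,k}e^{\eta_k(X^C_r-q)}$; below the barrier it vanishes. Adding the remaining drift $-C_rV^N_x-\gamma e^{-\delta r}C_rV^N$ and invoking the identity $V^N_x+\gamma e^{-\delta r}V^N=e^{-\delta r}-\psi^N$ (immediate from the definition of $\psi^N$), the full drift integrand collapses to $(C_r-\xi\one_{X^C_r>q})\psi^N(e^{-\delta r},X^C_r)-C_r e^{-\delta r}-\one_{X^C_r>q}E(r,X^C_r)$, where $E$ denotes the error term of Lemma \ref{l:almost HJB}. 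Taking $\mE_{(t,x)}$, letting $s\to\infty$ by bounded convergence (using that $V^N$ is uniformly bounded, that $V^N(t,0)=0$, and that $V^N(s,\cdot)\to 0$ as $s\to\infty$), and observing that the $-C_re^{-\delta r}$ piece integrates against $e^{-\gamma\int}$ to produce exactly $-V^C(t,x)$, I arrive at
\[ V^C(t,x)=V^N(t,x)+\mE_{(t,x)}\Bigl[\int_t^{\tau^C}\!\!e^{-\gamma\int_t^r e^{-\delta u}C_u\md u}\bigl\{(C_r-\xi\one_{X^C_r>q})\psi^N(e^{-\delta r},X^C_r)-\one_{X^C_r>q}E(r,X^C_r)\bigr\}\md r\Bigr]. \]

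Finally I would bound the integrand by a $C$-independent expression. Since $0\le C_r\le\xi$, one has $(C_r-\xi)\psi^N\le\xi(\psi^N)^-$ above the barrier and $C_r\psi^N\le\xi(\psi^N)^+$ below it, while $|E(r,x)|\le e^{-\delta r(N+1)}\xi\gamma\s_{k=0}^N|A_{N,k}|e^{\eta_k(x-q)}$. Expanding $\psi^N=-V^N_x+e^{-\delta r}(1-\gamma V^N)$ in powers of $e^{-\delta r}$ via the explicit formulas for $F^N$ (on $(q,\infty)$) and $G^N$ (on $(0,q)$) and noting that the $\gamma e^{-\delta r}V^N$ factor shifts the summation index upward by one -- which is precisely what generates the $n=N+1$ boundary contribution in the claim -- yields the bracketed coefficients of the proposition. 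Dropping the exponential discount factor (which is bounded by one) and applying the occupation bound of Theorem \ref{t:occupation bound} to each power $e^{-\delta(r-t)n}$ separately on $(q,\infty)$ and $(0,q)$ produces the three sums in the statement. Taking the supremum over $C$ on the left completes the proof. The main obstacle I foresee is precisely this bookkeeping: the index shift from $\gamma e^{-\delta r}V^N$ must be reconciled with the distinct error term $E$ so that the coefficients for $n=1,\dots,N$ and the isolated $n=N+1$ contribution line up as stated; the It\^o--Tanaka step at the barrier and the $s\to\infty$ passage are routine given the $\mathcal C^{(1,1)}$ fit and the uniform boundedness of $V^N$.
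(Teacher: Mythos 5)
Your proposal follows essentially the same route as the paper's proof: apply It\^o (via Peskir's change-of-variable formula, valid because of the $\mathcal C^{(1,1)}$ fit at the barrier) to $e^{-\gamma\int_t^{r}e^{-\delta u}C_u\md u}V^N$, replace the drift using Lemma \ref{l:almost HJB}, take expectations and let $s\to\infty$ to obtain $V^{\bar C}=V^N+\mE[\int(\bar C_r-\xi\one_{\{X_r>q\}})\psi^N-\one_{\{X_r>q\}}E]$, then bound the two terms $C$-independently via $0\le\bar C_r\le\xi$ and Theorem \ref{t:occupation bound}. The drift computation, the sign splitting $\xi(\psi^N)^-$ above and $\xi(\psi^N)^+$ below the barrier, the index-shift bookkeeping producing the isolated $n=N+1$ term, and dropping the factor $e^{-\gamma\int}\le 1$ all match the paper's argument.
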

\begin{proof}
Observe that $V^N$ is analytic outside the barrier $q$ and $\mathcal C^{(1,\infty)}$ on $\mathbb R_+\times\mathbb R_+$ and the second space derivative is a bounded function. Thus, we can apply the change of variables formula, confer \cite{peskir}.
\\Choose an arbitrary strategy $\bar C$ and denote its ruin time by $\tau$. 
Like before, applying Lemma \ref{l:almost HJB}, taking expectations and letting $s\rightarrow\infty$ yields:
\begin{align*}
&V^{\bar C}(t,x) = V^N(t,x) + \E_{(t,x)}\Big[\int_t^{\tau} e^{-\gamma\int_t^{r} \bar C_u\md u}(\bar C_r-\xi\one_{\{X^{\bar C}_r>q\}})\psi^N(e^{-\delta r},X^{\bar C}_r)\md r\Big] 
\\&{}- \xi\gamma\E_{(t,x)}\Big[\int_t^\tau e^{-\gamma\int_t^{r} \bar C_u\md u}e^{-\delta r(N+1)}\one_{\{X^{\bar C}_r>q\}}\s_{k=0}^NA_{N,k}e^{\eta_k(X^C_r-q)}\md r\Big] 
\\&\le V^N(t,x)+ \xi\gamma\E_{(t,x)}\Big[\int_t^\tau e^{-\delta r(N+1)}\one_{\{X^{\bar C}_r>q\}}\s_{k=0}^N|A_{N,k}|e^{\eta_k(X^C_r-q)}\md r\Big]
\\& {}+ \E_{(t,x)}\Big[\int_t^{\tau} \Big(-\xi\one_{\{X^{\bar C}_r>q,\psi^N(e^{-\delta r},X^{\bar C}_r)<0\}} + \xi\one_{\{X^{\bar C}_r<q,\psi^N(e^{-\delta r},X^{\bar C}_r)>0\}}\Big)\psi^N(e^{-\delta r},X^{\bar C}_r)\md r\Big],
\end{align*}
where we used that $0\le \bar C_r\le \xi$. Inserting the definition of $\psi^N$, pulling out the sum and applying Theorem \ref{t:occupation bound} yields

\begin{align*}
&V^{\bar C}(t,x) \le V^N(t,x) 
\\&{}+ \s_{k=1}^n e^{-\delta tk} \xi\Bigg[\int_q^\infty \bigg(\s_{l=0}^k\eta_l A_{k,l}e^{\eta_l(y-q)}-\one_{k=1}+\one_{k\neq 1}\s_{l=0}^{k-1}A_{k-1,l}e^{\eta_l(y-q)}\bigg)^+ f_k(x,y) \md y \\
  &{}+\int_0^q \bigg(-D_k\frac{\theta_ke^{\theta_k y}-\zeta_ke^{\zeta_ky}}{e^{\theta_k q}-e^{\zeta_kq}} +\Big\{\one_{k=1}-\one_{k\neq 1}D_k\frac{e^{\theta_k y}-e^{\zeta_ky}}{e^{\theta_k q}-e^{\zeta_kq}}\Big\}\bigg)^+ f_k(x,y) \md y \Bigg] \\
 &{}+ e^{-\delta t(N+1)}\xi\gamma \int_q^\infty \s_{k=0}^N|A_{N,k}|e^{\eta_k(y-q)} f_{N+1}(x,y)\md y.
\end{align*}
Since $\bar C$ was an arbitrary strategy and the right hand side does not depend on $\bar C$, the claim follows.
\end{proof}
Now we quantify the notion $V^N\approx V^C$. Here, we see a single error term which corresponds to the approximation error (third summand) in Proposition \ref{p:goodness barrier}.
\begin{Lem}
Let $t,x\geq 0$. Then we have
\[
|V^N(t,x)-V^C(t,x)| \le e^{-\delta t(N+1)} \xi\gamma \int_q^\infty \s_{k=0}^N|A_{N,k}|e^{\eta_k(y-q)} f_{N+1}(x,y)\md y.
\]
\end{Lem}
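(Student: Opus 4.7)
The plan is to specialize the argument from the proof of Proposition \ref{p:goodness barrier} to the case where the candidate strategy equals the barrier strategy $C$ itself. Because $V^N$ and $V^C$ are both associated to the same control, the suboptimality discrepancy $(\bar C_r-\xi\one_{\{X_r^{\bar C}>q\}})\psi^N$ that dominated the bound in Proposition \ref{p:goodness barrier} vanishes identically, leaving only the approximation error contributed by Lemma \ref{l:almost HJB}.

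Concretely, first I would apply the change-of-variables formula to the process $e^{-\gamma\int_t^{r} C_u\md u} V^N(r, X_r^C)$ on $[t,s\w\tau]$, exactly as was done for an arbitrary $\bar C$ in the proof of Proposition \ref{p:goodness barrier}. The boundedness of $V^N_x$ makes the stochastic integral a true martingale, and Lemma \ref{l:almost HJB} decomposes the drift into two pieces: the $(\bar C_r-\xi\one_{\{X_r^{\bar C}>q\}})\psi^N$ term (which is zero for $\bar C=C$ pointwise) and the explicit error term $-e^{-\delta r(N+1)}\xi\gamma\one_{\{X_r^C>q\}}\s_{k=0}^N A_{N,k} e^{\eta_k(X_r^C-q)}$. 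Taking $\PP$-expectations, letting $s\to\infty$, and invoking bounded convergence (using that $V^N$ is bounded and $V^N(r,X_{\tau}^C)=0$ in the limit) yields the clean identity
\[
V^N(t,x)-V^C(t,x) = \xi\gamma\,\mE_{(t,x)}\Big[\int_t^\tau e^{-\gamma\int_t^{r} C_u\md u} e^{-\delta r(N+1)} \one_{\{X_r^C>q\}} \s_{k=0}^N A_{N,k} e^{\eta_k(X_r^C-q)}\md r\Big].
\]

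Next I would pass to absolute values, bound $e^{-\gamma\int_t^{r}C_u\md u}\le 1$, and move the absolute value inside the sum via $|\s_k A_{N,k}e^{\eta_k(\cdot-q)}|\le\s_k|A_{N,k}|e^{\eta_k(\cdot-q)}$. After factoring $e^{-\delta t(N+1)}$ out by the time shift $r=t+s$, term-by-term application of the occupation bound from Theorem \ref{t:occupation bound} (with exponent $n=N+1$, just as in Proposition \ref{p:goodness barrier}) converts the expectation into the deterministic integral $\int_q^\infty \s_{k=0}^N|A_{N,k}|e^{\eta_k(y-q)} f_{N+1}(x,y)\md y$, giving the claimed inequality.

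The main technical input — the sharp occupation bound $f_{N+1}$ — is already supplied by Theorem \ref{t:occupation bound}, so no new estimate is needed here. The only subtle point I would double-check is the justification for interchanging limit and expectation when sending $s\to\infty$ in the identity for $V^N(t,x)-V^C(t,x)$; this is routine since the integrand is bounded by $\xi\gamma\, e^{-\delta t(N+1)}\s_{k=0}^N|A_{N,k}|e^{\eta_k(X_r^C-q)}\one_{\{X_r^C>q\}}$ whose total expectation is already controlled by the occupation bound, so the limit is dominated. Everything else reduces to bookkeeping already carried out in Proposition \ref{p:goodness barrier}.
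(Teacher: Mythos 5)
Your proposal is correct and follows essentially the same route as the paper: specialize the computation from Proposition~\ref{p:goodness barrier} to $\bar C = C$, note that the suboptimality term $(\bar C_r - \xi\one_{\{X_r^{\bar C}>q\}})\psi^N$ vanishes pointwise, and then bound the remaining approximation-error expectation via $|\cdot|$, $e^{-\gamma\int_t^r C_u\,\mathrm du}\le 1$, and Theorem~\ref{t:occupation bound}. This is precisely what the paper means by ``following the lines of the proof of Proposition~\ref{p:goodness barrier}\dots Trivial inequalities and Theorem~\ref{t:occupation bound} yield the claim.''
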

\begin{proof}
By following the lines of the proof of Proposition \ref{p:goodness barrier} with the specific strategy $\bar C_t=C_t = \xi \one_{\{X_t^C>q\}}$ until estimates are used yields

\begin{align*}
V^C(t,x) &= V^N(t,x)  + \E_{(t,x)}\Big[\int_t^{\tau} e^{-\gamma\int_t^{r}  C_u\md u}( C_r-\xi\one_{\{X^{ C}_r>q\}})\psi^N(r,X^{ C}_r)\md r\Big] \\
  &\quad - \xi\gamma\E_{(t,x)}\Big[\int_t^\tau e^{-\gamma\int_t^{r}  C_u\md u}e^{-\delta r(N+1)}\one_{\{X^{ C}_r>q\}}\s_{k=0}^NA_{N,k}e^{\eta_k(X^C_r-q)}\md r\Big] \\
  &\quad - \xi\gamma\E_{(t,x)}\Big[\int_t^\tau e^{-\gamma\int_t^{r}  C_u\md u}e^{-\delta r(N+1)}\one_{\{X^{ C}_r>q\}}\s_{k=0}^NA_{N,k}e^{\eta_k(X^C_r-q)}\md r\Big].
\end{align*}
Trivial inequalities and Theorem \ref{t:occupation bound} yield the claim.
\end{proof}

\section{Examples \label{bsp}}
\noindent
Here, we consider two examples. The first one will illustrate how the value function and the optimal strategy can be calculated using a straightforward approach under various unproven assumptions. In fact, we will assume (without proof) that the value function is smooth enough, the optimal strategy is of barrier type and that the barrier, the value function above the barrier and the value function below the barrier have suitable power series representations. In \cite{ghsz} has been observed that similar power series -- if exist -- have very large coefficients for certain parameter choices. This could mean that the power series doesn't converge or that insufficient computing power was at hand.
\\
In the second subsection, we will illustrate the new approach and calculate the distance of the performance function of a constant barrier strategy to the value function. The key advantages of this approach are that we do not rely on properties of the value function, nor do we need to know how it looks like. From a practical perspective, if the value function cannot be found, one should simply choose any strategy with an easy-to-calculate return function. Then, it is good to know how large the error to the optimal strategy is.
\subsection{The straightforward approach}
\noindent
In this example we let $\mu = 0.15$, $\delta = 0.05$, $\gamma = 0.2$ and $\sigma=1$. We try to find the value function numerically. However, we do not know whether the assumptions which we will make do actually hold true for any possible parameters --- or, even for the parameters we chose. 

\noindent
We conjecture and assume that the optimal strategy is of a barrier type where the barrier is given by a time-dependent curve, say $\alpha$; the value function $V(t,x)$ is assumed to be a $\mathcal C^{1,2}(\R_+^2)$ function and we define
\begin{align*}
h(t,x) &:= V(t,x),\quad t\geq 0, x\in[\alpha(t),\infty), \\
g(t,x) &:= V(t,x),\quad t\geq 0, x\in[0,\alpha(t)],
\end{align*}
\begin{figure}[t]
\begin{center}\includegraphics[scale=0.4,]{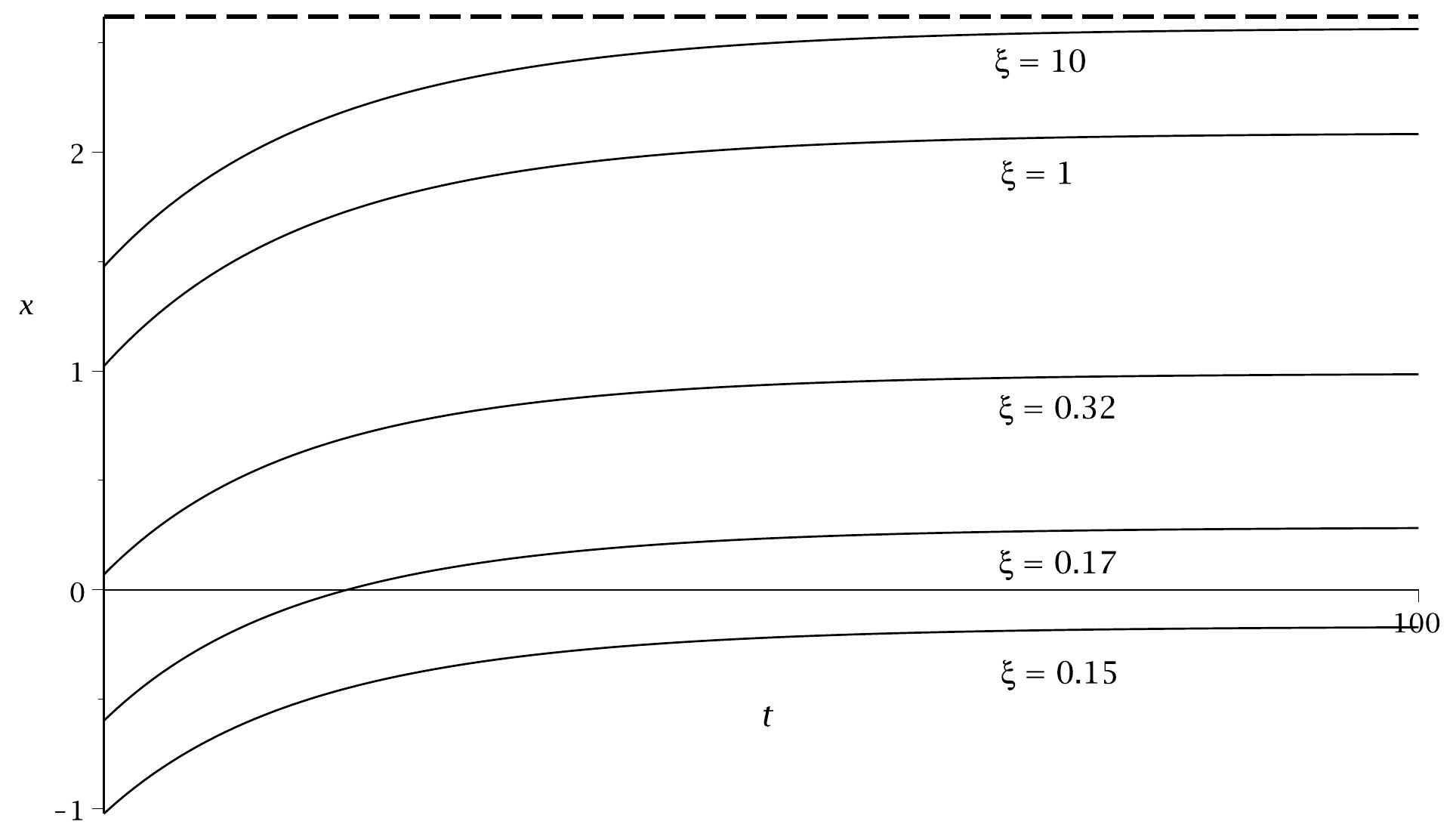}\end{center}
\caption{The optimal strategies for different values of $\xi$. The dashed line corresponds to the Asmussen-Taksar strategy \cite{astak} (unrestricted dividend case).}
\end{figure}
We assume that
\begin{align*}
&h(t,x):=\frac 1\gamma-\frac 1\gamma e^{-\de e^{-\delta t}}+ e^{-\de e^{-\delta t}}\s_{n=1}^\infty J_n e^{-\delta tn} e^{\eta_n x}\;,
\\&g(t,x):=\s_{n=1}^\infty L_ne^{-\delta tn} \big(e^{\theta_n x}-e^{\zeta_n x}\big)\;,
\\&\alpha(t):=\s_{n=0}^\infty \frac{a_n}{n!}e^{-\delta tn}\;,
\end{align*}
for some coefficients. Note that we do not investigate the question whether the functions $h$, $g$ and $\alpha$ have a power series representation. We define further auxiliary coefficients $b_{k,n},p_{k,n}$ and $q_{k,n}$:
\begin{align*}
e^{\eta_n \alpha(t)}=:\s_{k=0}^\infty\frac{b_{k,n}}{k!}e^{-\delta tk},
\quad\quad e^{\theta_n \alpha(t)}=:\s_{k=0}^\infty\frac{p_{k,n}}{k!}e^{-\delta tk},
\quad\quad e^{\zeta_n \alpha(t)}=:\s_{k=0}^\infty\frac{q_{k,n}}{k!}e^{-\delta tk}\;,
\end{align*}
Since we assume that the value function is twice continuously differentiable with respect to $x$ we have
\begin{align}
&h(t,\alpha(t))=g(t,\alpha(t)),\nonumber
\\&g_x(t,\alpha(t))=h_x(t,\alpha(t)), \label{smooth}
\\&g_{xx}(t,\alpha(t))=h_{xx}(t,\alpha(t)),\nonumber
%\\&h_{xx}(t,\alpha(t))=g_{xx}(t,\alpha(t))\;.\nonumber
%\\&g_x(t,\alpha(t))=e^{-\delta t}\Big(1-\gamma g(t,\alpha(t))\Big), \label{smooth}
%\\&h_x(t,\alpha(t))=e^{-\delta t}\Big(1-\gamma h(t,\alpha(t))\Big).\nonumber
%%\\&h_{xx}(t,\alpha(t))=g_{xx}(t,\alpha(t))\;.\nonumber
\end{align}
\begin{figure}[t]
\includegraphics[scale=0.65, bb =-30 20 200 150]{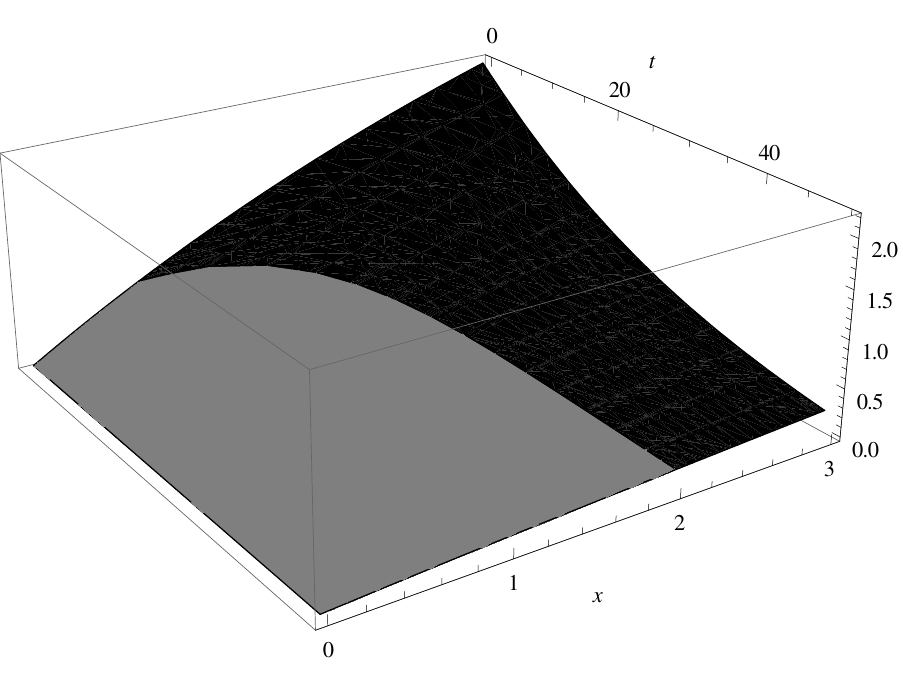}
\includegraphics[scale=0.65, bb =-130 20 200 150]{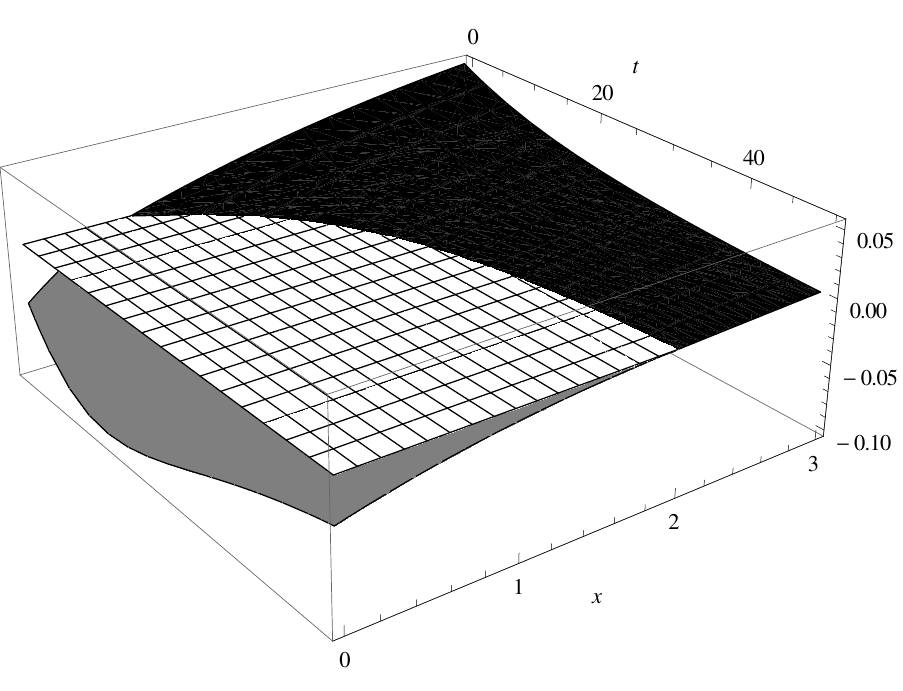}
\caption{The functions $h(t,x)$ (black) \& $g(t,x)$ (gray) in the left picture and the functions $-h_x+e^{-\delta t}(1-\gamma h)$ (black) \& $-g_x+e^{-\delta t}(1-\gamma g)$ (gray) \& $0$ (white) in the right picture for $\xi=1$.}
\end{figure}
\begin{figure}[h]
\includegraphics[scale=0.55, bb =120 50 500 750]{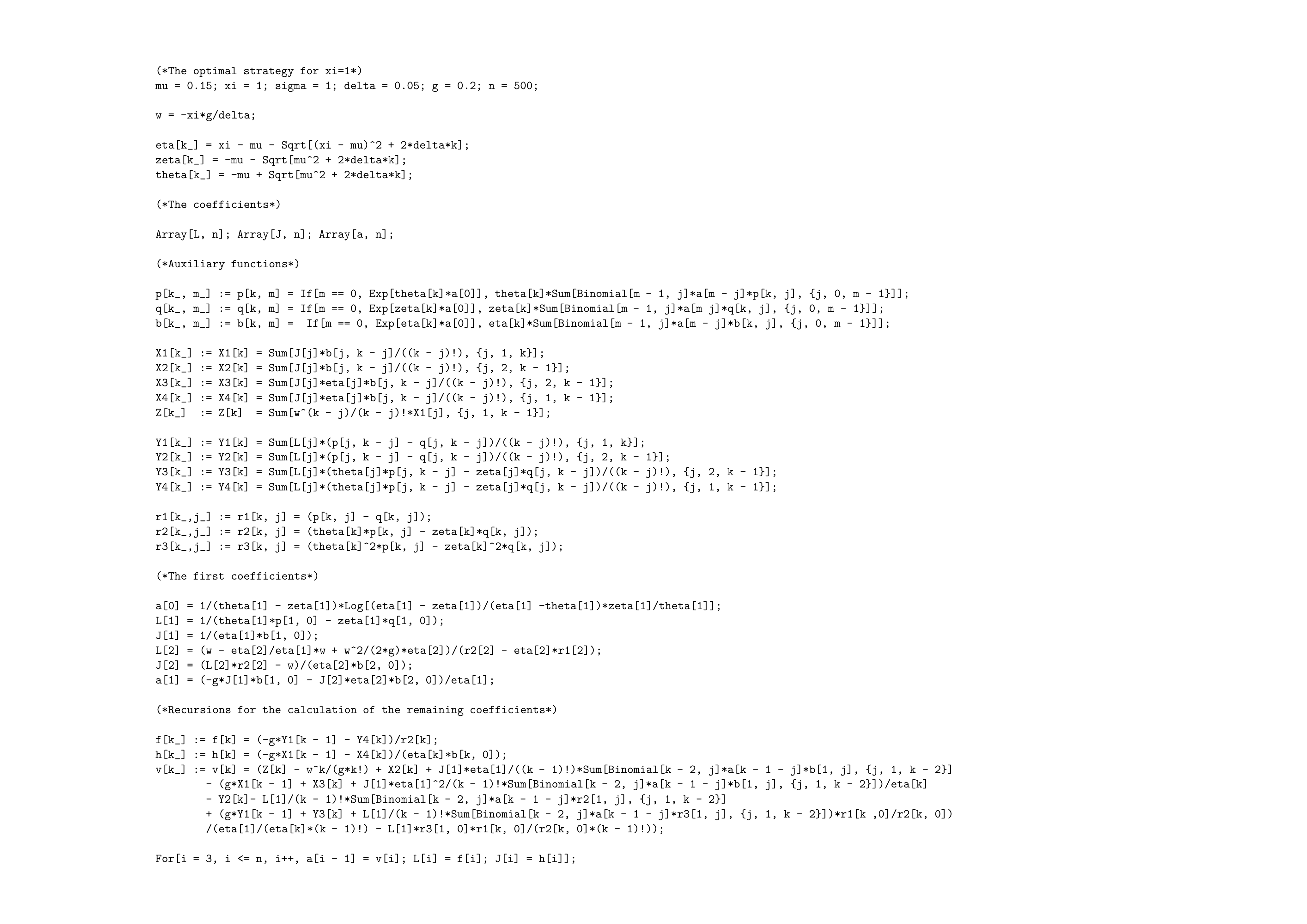}
\caption{Mathematica code for the calculation of the coefficients $J_n$, $L_n$ and $a_n$.}
\end{figure}
Note that \eqref{smooth} yields $h_t(t,\alpha(t))=g_t(t,\alpha(t))$. Therefore, we can conclude $h_x(t,\alpha(t))=e^{-\delta t}\Big(1-\gamma h(t,\alpha(t))\Big)$.
Thus, we can find the coefficients $a_n$, $J_n$ and $L_n$ from the three equations \eqref{smooth}.
Note that using the general Leibniz rule, one gets
\begin{align*}
&b_{k+1,n}=\eta_n\s_{j=0}^k\binom{k}{j}a_{k-j+1}b_{j,n},\quad b_{0,n}=e^{\eta_n\alpha(0)}, 
\\&p_{k+1,n}=\theta_n\s_{j=0}^k\binom{k}{j}a_{k-j+1}p_{j,n},\quad p_{0,n}=e^{\theta_n\alpha(0)}, 
\\&q_{k+1,n}=\zeta_n\s_{j=0}^k\binom{k}{j}a_{k-j+1}q_{j,n},\quad q_{0,n}=e^{\zeta_n\alpha(0)}\;.
\end{align*}
For $m\in\{0,1,2,3\}$ we define the coefficients
\begin{align*}
&X_{m,j}:=\s_{n=1}^jJ_n\eta_n^{m-1}\frac{b_{j-n,n}}{(j-n)!},  &&  Z_{m,k}:=\s_{j=1}^k \frac{\de^{k-j}}{(k-j)!}X_{m,j},\\
&W_{m,k,j}:=L_j\big(\theta_j^{m-1}p_{k,j}-\zeta_j^{m-1}q_{k,j}\big),  && Y_{m,k}:=\s_{n=1}^k \frac{W_{m,k-n,n}}{(k-n)!}.
\end{align*}
\bigskip
Thus, we have
\begin{align*}
&g(t,\alpha(t))=\s_{k=1}^\infty e^{-\delta tk}Y_{1,k},  && h(t,\alpha(t))=\s_{k=1}^\infty e^{-\delta tk} Z_{1,k}-\frac1\gamma\s_{k=1}^\infty \Big(-\frac{\xi\gamma}{\delta}\Big)^k\frac{e^{-\delta tk}}{k!},
\\&g_x(t,\alpha(t))=\s_{k=1}^\infty e^{-\delta tk}Y_{2,k}, && h_x(t,\alpha(t))=\s_{k=1}^\infty e^{-\delta tk}Z_{2,k},
\\&g_{xx}(t,\alpha(t))=\s_{k=1}^\infty e^{-\delta tk}Y_{3,k}, && h_{xx}(t,\alpha(t))=\s_{k=1}^\infty e^{-\delta tk}Z_{3,k}.
%\\&=-\frac{2\mu}{\sigma^2}g_x(t,\alpha(t))-\frac{2}{\sigma^2}\s_{k=1}^\infty t^k\s_{n=1}^k nL_n\frac{p_{k-n,n}- q_{k-n,n}}{(k-n)!}
%h(t,\alpha(t))&=\s_{k=1}^\infty t^kZ_{1,k}-\frac1\gamma\s_{k=1}^\infty \Big(-\frac{\xi\gamma}{\delta}\Big)^k\frac{t^k}{k!}
%\\h_x(t,\alpha(t))&=\s_{k=1}^\infty t^kZ_{2,k}
%\\h_{xx}(t,\alpha(t))&=\s_{k=1}^\infty t^kZ_{3,k}
%\s_{n=1}^\infty L_nt^n \big(e^{\theta_n x}-e^{\zeta_n x}\big)=
\end{align*}
Equating coefficients yields
\begin{align}
Y_{1,k}=Z_{1,k}-(-1)^k\frac{\xi^k\gamma^{k-1}}{\delta^kk!},\quad\quad\quad
Y_{2,k}=Z_{2,k},\quad\quad\quad
 Y_{3,k}=Z_{3,k}\;. \label{fg}
\end{align}
Note that Equations \eqref{fg} specify $L_k$, $J_k$ and $a_{k-1}$ in $k$th step.
The coefficients given above have a recursive structure. Due to this fact the numerical calculations turn out to be very time- and disk space-consuming. 
Numerical calculations show that the above procedure yields well-defined power series for relative small values of $\xi$. However, for big $\xi$ the coefficients explode, which makes the calculations unstable and imprecise especially for $t$ close to zero.
\subsection{The distance to the value function}\label{S:approx}
\begin{figure}[t]\label{F:4plots}
\begin{center}
\includegraphics[scale=.7,bb = 250 50 300 430]{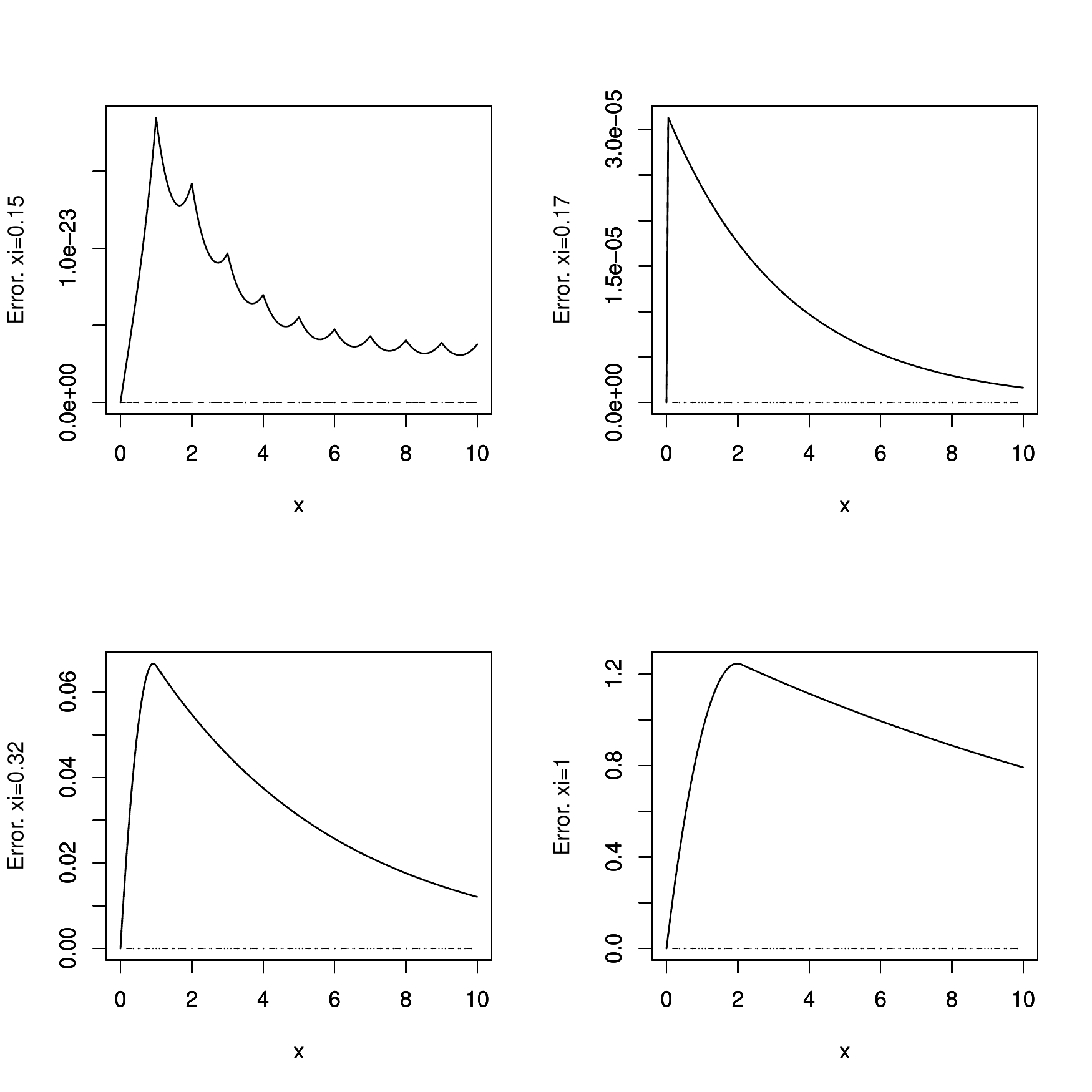}
\end{center}
\caption{The plots show the error bounds given by Proposition \ref{p:goodness barrier} for the barrier strategy with parameters given in Section \ref{S:approx} at time $t=0$.}
\end{figure}
We use the same parameters as in the previous section, i.e.\ $\mu = 0.15$, $\delta = 0.05$, $\gamma = 0.2$ and $\sigma=1$. We illustrate the error bound given by Proposition \ref{p:goodness barrier} for $N=20$ summands and four different values for $\xi$, namely $0.15$, $0.17$, $0.32$ and $1$. We will compare the unknown value function to the performance of the barrier strategy with barrier at
 $$ q = \bigg(\frac{\log(-\zeta_1)+\log(\zeta_1+\eta_1)-\log(\theta_1)-\log(\theta_1-\eta_1)}{\theta_1-\zeta_1}\bigg)^+, $$
 i.e.\ we employ the strategy $C_s = \xi\one_{\{X_s^C\geq q\}}$. This barrier strategy has been shown to be optimal if no utility function is applied, confer \cite[p.\ 97]{HS}. In the case of $\xi=0.15$ one finds $q=0$, i.e.\ we pay out at maximal rate all the time which is optimal due to Proposition \ref{p:0 opt}. Therefore, this case is left with approximation error only. For the other values of $\xi$, it is non-optimal to follow a barrier strategy and, hence, we do have a substantial error which cannot disappear in the limit. The corresponding pictures in Figure \ref{F:4plots} show this error as for $N=20$ summands the approximation error is already several magnitudes smaller than the error incurred by following a suboptimal strategy.

\appendix
\section{Appendix}\label{s:Appendix}
 In this section we provide deterministic upper bounds for the expected discounted occupation of a process whose drift is not precisely known. This allows to derive an upper bound for the expect discounted and cumulated positive functional of the process. These bounds are summarised in Theorem \ref{t:occupation bound}.
\\Let $a,b\in\mathbb R$ with $a\leq b$, $I:=[a,b]$, $\sigma>0$, $\delta\geq 0$, $W$ a standard Brownian motion and consider the process
  $$ \md X_t = C_t \md t + \sigma \md W_t $$
where $C$ is some $I$-valued progressively measurable process. We recall that we denote by $\mP_x$ a measure with $\mP_x[X_0=x]$. The local time of $X$ at level $y$ and time $t$ is denoted by $L_t^y$ and $\tau:=\inf\{t\geq 0: X_t = 0\}$. Further we define for $x,y\geq 0$
\begin{align*}
&\alpha := \frac{a+\sqrt{a^2+2\delta\sigma^2}}{\sigma^2}, 
 \quad\quad \beta_+ := \frac{\sqrt{b^2+2\delta\sigma^2}-b}{\sigma^2}, 
  \quad\quad  \beta_- := \frac{-\sqrt{b^2+2\delta\sigma^2}-b}{\sigma^2}, \\
      &f(x,y) := \frac{2\left(e^{\beta_+(x \wedge y)}-e^{\beta_-(x\wedge y)}\right)e^{-\alpha(x-y)^+}}{\sigma^2\left((\beta_++\alpha)e^{y\beta_+}-(\beta_-+\alpha)e^{y\beta_-}\right)}.
  \end{align*}

 \begin{Thm}\label{t:occupation bound}
We have $\E_x\left[ \int_0^\tau e^{-\delta s} L_{\md s}^y\right]  \le \sigma^2 f(x,y)$.
   In particular, for any measurable function $\psi:\mathbb R_+\rightarrow\mathbb R_+$ we have
    $$ \E_x\left[\int_0^\tau e^{-\delta s}\psi(X_s) \md s \right] \leq \int_0^\infty \psi(y)f(x,y) \md y. $$
 \end{Thm}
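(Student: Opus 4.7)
The plan is to construct an explicit supersolution and apply the Tanaka--Meyer change of variables formula. Fix $y\ge 0$ and set $h(x) := \sigma^2 f(x,y)/2$, so that
\[
h(x) = K\bigl(e^{\beta_+(x\w y)}-e^{\beta_-(x\w y)}\bigr)e^{-\alpha(x-y)^+},\qquad K^{-1}=(\beta_++\alpha)e^{y\beta_+}-(\beta_-+\alpha)e^{y\beta_-}.
\]
Since $\beta_\pm$ are the two roots of $\tfrac{\sigma^2}{2}r^2+br-\delta=0$ and $-\alpha$ is the negative root of $\tfrac{\sigma^2}{2}r^2+ar-\delta=0$, a direct computation gives $\tfrac{\sigma^2}{2}h''+bh'-\delta h\equiv 0$ on $(0,y)$ and $\tfrac{\sigma^2}{2}h''+ah'-\delta h\equiv 0$ on $(y,\infty)$. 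I would also verify that $h\ge 0$ with $h(0)=0$, that $h$ is $\mathcal C^1$ on $\R_+\setminus\{y\}$ and continuous on $\R_+$, that $h$ and $h'$ are bounded, and that the one-sided derivatives satisfy $h'(y-)-h'(y+)=1$ -- this last identity is precisely what the normalisation $K$ is designed to achieve. Positivity of $K$ follows from $\beta_++\alpha>0$ combined with $e^{y\beta_+}>e^{y\beta_-}$, independently of the sign of $\beta_-+\alpha$.

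The supersolution inequality is then straightforward: on $(0,y)$ one has $h'>0$, so for any $c\in[a,b]$
\[
\tfrac{\sigma^2}{2}h''(x)+ch'(x)-\delta h(x)=(c-b)h'(x)\le 0,
\]
and on $(y,\infty)$ one has $h'<0$, so $\tfrac{\sigma^2}{2}h''+ch'-\delta h=(c-a)h'\le 0$. Hence, for every $I$-valued progressively measurable drift $C$, the function $h$ is a classical supersolution of the generator of $X$ on $\R_+\setminus\{y\}$.

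Next, the Meyer--Ito formula for $\mathcal C^1$-functions with locally finite second distributional derivative, combined with the occupation density formula to identify the atomic part at $y$ as a local time, yields
\[
h(X_{t\w\tau})=h(x)+\int_0^{t\w\tau}\!\sigma h'(X_s)\md W_s+\int_0^{t\w\tau}\!\bigl[\tfrac{\sigma^2}{2}h''(X_s)\one_{\{X_s\neq y\}}+C_sh'(X_s)\bigr]\md s-\tfrac12 L_{t\w\tau}^y
\]
for every $t\ge 0$. Multiplying by $e^{-\delta t}$ via integration by parts and using the supersolution bound from the previous paragraph gives
\[
\E_x\!\left[e^{-\delta(t\w\tau)}h(X_{t\w\tau})\right]\le h(x)-\tfrac12\,\E_x\!\left[\int_0^{t\w\tau}e^{-\delta s}L_{\md s}^y\right],
\]
where the stochastic integral has zero expectation because $h'$ is bounded. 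Since $h\ge 0$, the left-hand side is nonnegative and monotone convergence as $t\to\infty$ yields $\E_x[\int_0^\tau e^{-\delta s}L_{\md s}^y]\le 2h(x)=\sigma^2 f(x,y)$. The ``in particular'' statement then follows at once from the occupation density identity $\sigma^2\int_0^\tau \psi(X_s)\md s=\int_0^\infty \psi(y)L_\tau^y\md y$ together with Fubini--Tonelli applied to its $e^{-\delta s}$-weighted version, reducing the $\psi$-integral to the previous bound pointwise in $y$.

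The main technical obstacle is the careful bookkeeping of the atom in the second distributional derivative of $h$ at the kink $y$: it must contribute exactly the coefficient $-\tfrac12$ in front of $L_t^y$, which pins down the normalisation $K$. The remaining ingredients -- positivity of $K$, signs of $h'$ on each side of $y$, boundedness of $h'$, and justification of the martingale property -- are straightforward explicit computations.
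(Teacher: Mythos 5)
Your proof is correct and takes essentially the same approach as the paper: you verify that a rescaling of $f$ is a supersolution of the generator inequality $\frac{\sigma^2}{2}h''+ch'-\delta h\le 0$ away from $y$, that the derivative has a unit downward jump at $y$, apply the Meyer--Ito/Tanaka formula to surface the discounted local-time term with the right coefficient, use boundedness of the derivative for the martingale property, and let $t\to\infty$ — precisely the content of the paper's auxiliary lemmas and the verification step implicitly invoked in its two-line theorem proof. The only substantive difference is that the paper also proves sharpness (exact equality for the bang-bang drift $a\one_{\{\cdot>y\}}+b\one_{\{\cdot\le y\}}$) in a separate lemma, which you correctly omit as it is not needed for the stated upper bound.
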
  
The proof is given at the end of this section.  
  
 \begin{Lem}
    $f$ is absolutely continuous in its first variable with derivative
       \begin{align*}
      f_x(x,y) &:= \begin{cases} 
      \frac{2\left(\beta_+ e^{x\beta_+}-\beta_-e^{x\beta_-}\right)}{\sigma^2\left((\beta_++\alpha)e^{y\beta_+}-(\beta_-+\alpha)e^{y\beta_-}\right)} & x \le y, \\ 
      \frac{2\left(-\alpha e^{y\beta_+}-\alpha e^{y\beta_-}\right)e^{-\alpha(x-y)}}{\sigma^2\left((\beta_++\alpha)e^{y\beta_+}-(\beta_-+\alpha)e^{y\beta_-}\right)} & x>y.
      \end{cases}
  \end{align*}
   For any $y\geq 0$ the function $f_x(\cdot,y)$ is of finite variation and
      \begin{align*}
         \md f_x(x,y) &= -\frac{2}{\sigma^2}\delta_y(\md x) + \left(\frac{2\delta}{\sigma^2}f(x,y)-\frac{2(b\one_{\{x<y\}}+a\one_{\{x>y\}})}{\sigma^2}f_x(x,y)\right)\md x
      \end{align*}
     where $\delta_y$ denotes the Dirac-measure in $y$. Moreover, if we denote by $f_{xx}(x,y)$ the second derivative of $f$ with respect to the first variable for $x\neq y$, then we get
     $$ \sup_{u\in[a,b]}\left(\frac{\sigma^2}{2}f_{xx}(x,y) + uf_x(x,y)-\delta f(x,y)\right) = 0,\quad x\neq y. $$
 \end{Lem}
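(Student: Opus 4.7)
The plan is to verify the three assertions by direct computation, exploiting the fact that on each of the open regions $\{x<y\}$ and $\{x>y\}$ the map $x\mapsto f(x,y)$ is an elementary combination of exponentials tailored to the characteristic equations of drifts $b$ and $a$, respectively. Concretely, $e^{\beta_\pm x}$ are the two fundamental solutions of $\frac{\sigma^2}{2}\phi''+b\phi'-\delta\phi=0$ (since $\beta_\pm$ are the roots of $\frac{\sigma^2}{2}\lambda^2+b\lambda-\delta=0$), while $x\mapsto e^{-\alpha(x-y)}$ is the decaying fundamental solution of $\frac{\sigma^2}{2}\phi''+a\phi'-\delta\phi=0$; these two identities drive essentially every calculation below. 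Throughout I will implicitly assume $\delta>0$ so that $\alpha,\beta_+>0$ and $\beta_-<0$.

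For the absolute continuity statement, I would differentiate each piece of $f$ in $x$ and check that the result matches the stated formula for $f_x$ on $\{x<y\}$ and $\{x>y\}$, respectively. Since the two pieces of $f$ agree at $x=y$, the function $f(\cdot,y)$ is continuous; the derivative is smooth and bounded on every compact subset of $\R_+\setminus\{y\}$ and has finite one-sided limits at $y$. Hence $f(\cdot,y)$ is locally Lipschitz and therefore absolutely continuous with a.e.\ derivative given by the asserted formula.

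For the finite-variation claim, I would start by taking the one-sided limits of $f_x$ at $x=y$. A direct algebraic manipulation of the two one-sided expressions collapses to
\[
\lim_{x\downarrow y}f_x(x,y)-\lim_{x\uparrow y}f_x(x,y)=\frac{-2\bigl((\beta_++\alpha)e^{y\beta_+}-(\beta_-+\alpha)e^{y\beta_-}\bigr)}{\sigma^2\bigl((\beta_++\alpha)e^{y\beta_+}-(\beta_-+\alpha)e^{y\beta_-}\bigr)}=-\frac{2}{\sigma^2},
\]
accounting precisely for the Dirac term $-\frac{2}{\sigma^2}\delta_y(\md x)$. Off the diagonal, $f_x(\cdot,y)$ is smooth, and using that $e^{\beta_\pm x}$ (resp.\ $e^{-\alpha(x-y)}$) is killed by $\frac{\sigma^2}{2}\partial_x^2+b\partial_x-\delta$ (resp.\ $\frac{\sigma^2}{2}\partial_x^2+a\partial_x-\delta$) one obtains
\[
\tfrac{\sigma^2}{2}f_{xx}(x,y)=\delta f(x,y)-\bigl(b\one_{\{x<y\}}+a\one_{\{x>y\}}\bigr)f_x(x,y)\quad\text{for }x\neq y.
\]
Combining this absolutely continuous density with the Dirac jump above yields the asserted decomposition of $\md f_x(x,y)$, and in particular $f_x(\cdot,y)$ has locally finite total variation.

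For the supremum claim, substituting the expression for $\tfrac{\sigma^2}{2}f_{xx}$ obtained in the previous step reduces the bracket $\tfrac{\sigma^2}{2}f_{xx}(x,y)+uf_x(x,y)-\delta f(x,y)$ to $(u-b)f_x(x,y)$ on $\{x<y\}$ and to $(u-a)f_x(x,y)$ on $\{x>y\}$. It then suffices to sign $f_x$: on $\{x<y\}$ the numerator of $f_x$ equals $\beta_+e^{\beta_+ x}-\beta_-e^{\beta_- x}$, which is strictly positive because $\beta_+>0>\beta_-$; on $\{x>y\}$ one has $f_x=-\alpha f$, which is strictly negative. Consequently, the supremum over $u\in[a,b]$ is attained at $u=b$ below the diagonal and at $u=a$ above it, and equals $0$ in both cases. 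The only genuine algebraic obstacle is to confirm that the common denominator $D(y):=(\beta_++\alpha)e^{y\beta_+}-(\beta_-+\alpha)e^{y\beta_-}$ is strictly positive for all $y\geq 0$, so that indeed $f>0$; I would handle this by noting that $D(0)=\beta_+-\beta_->0$ and analysing the sign of $D'(y)$ according to the sign of $\beta_-+\alpha$.
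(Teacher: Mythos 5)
Your proof is correct and follows the direct-computation route that the paper leaves implicit (the paper's proof is literally just ``Straightforward''): differentiate piecewise using that $\beta_\pm$ solve the $b$-characteristic equation and $-\alpha$ the $a$-characteristic equation, read off the jump of $f_x$ at $x=y$ to get the Dirac term, reduce the supremand to $(u-b)f_x$ resp.\ $(u-a)f_x$, and sign $f_x$ via the positivity of the denominator $D(y)$. Note that in doing so you have implicitly corrected a sign typo in the lemma's displayed formula for $f_x$ on $\{x>y\}$: the numerator should read $-\alpha e^{y\beta_+}+\alpha e^{y\beta_-}$, i.e.\ $-\alpha\bigl(e^{y\beta_+}-e^{y\beta_-}\bigr)$ so that $f_x=-\alpha f$, which is exactly what your jump computation yielding $-\tfrac{2}{\sigma^2}$ requires.
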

 \begin{proof}
   Straightforward.
 \end{proof}

 \begin{Lem}
   Let $y\geq 0$ and assume that $C_t = a\one_{\{X_t>y\}} + b\one_{\{X_t\leq y\}}$. Then
    $$ \E_x\Big[\int_0^\tau e^{-\delta s}L_{\md s}^y\Big] = \sigma^2f(x,y). $$
 \end{Lem}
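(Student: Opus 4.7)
The plan is to apply the generalised Itô--Tanaka change-of-variables formula of \cite{peskir} to $f(X_t,y)$ (with $y$ fixed) and to feed it the measure-valued derivative identity for $f_x(\cdot,y)$ proved in the preceding lemma. Under the particular feedback drift $C_t=a\one_{\{X_t>y\}}+b\one_{\{X_t\le y\}}$ several terms cancel exactly, leaving only a local-time contribution whose expectation gives the claim.

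First I would use that $F(\cdot):=f(\cdot,y)$ is a difference of convex functions with $F'=f_x(\cdot,y)$ of locally finite variation and that, by the previous lemma, the signed measure $\md F'$ is
\begin{equation*}
\md F'(z) = -\tfrac{2}{\sigma^2}\delta_y(\md z) + \Bigl(\tfrac{2\delta}{\sigma^2}f(z,y) - \tfrac{2}{\sigma^2}\bigl(b\one_{\{z<y\}}+a\one_{\{z>y\}}\bigr)f_x(z,y)\Bigr)\md z .
\end{equation*}
Combining the Meyer--Tanaka formula $F(X_t)=F(X_0)+\int_0^tf_x(X_s,y)\md X_s+\tfrac{1}{2}\int L_t^z\,\md F'(z)$ with the occupation-time formula $\int L_t^z h(z)\,\md z=\sigma^2\int_0^t h(X_s)\md s$, and substituting $\md X_s=C_s\md s+\sigma\md W_s$, I obtain
\begin{equation*}
f(X_t,y) = f(x,y) + \int_0^t f_x(X_s,y)\md X_s + \int_0^t\!\Bigl(\delta f(X_s,y)-\bigl(b\one_{\{X_s<y\}}+a\one_{\{X_s>y\}}\bigr)f_x(X_s,y)\Bigr)\md s - \tfrac{1}{\sigma^2}L_t^y.
\end{equation*}

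Next I insert the explicit form of $C_s$ into $\int_0^t f_x(X_s,y)\md X_s$. Since $\{s:X_s=y\}$ has zero Lebesgue measure $\mP_x$-a.s., $C_s$ equals $b$ on $\{X_s<y\}$ and $a$ on $\{X_s>y\}$, so the two drift-type terms cancel termwise and
\begin{equation*}
f(X_t,y) = f(x,y) + \sigma\int_0^t f_x(X_s,y)\md W_s + \delta\int_0^t f(X_s,y)\md s - \tfrac{1}{\sigma^2}L_t^y.
\end{equation*}
Applying integration by parts to $e^{-\delta t}f(X_t,y)$ eliminates the $\delta\int f\,\md s$ term, leaving the finite-variation part equal to $-\tfrac{1}{\sigma^2}\int_0^t e^{-\delta s}L_{\md s}^y$. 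Stopping at $t\w\tau$ and taking $\E_x$-expectation (the Brownian integral is a true martingale because $f_x(\cdot,y)$ is bounded) yields
\begin{equation*}
\E_x\bigl[e^{-\delta(t\w\tau)}f(X_{t\w\tau},y)\bigr] - f(x,y) = -\tfrac{1}{\sigma^2}\E_x\!\int_0^{t\w\tau} e^{-\delta s}L_{\md s}^y .
\end{equation*}

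Finally, I let $t\to\infty$. Since $f(0,y)=0$ from $e^{\beta_+\cdot 0}-e^{\beta_-\cdot 0}=0$, on $\{\tau<\infty\}$ the integrand vanishes at the limit, while on $\{\tau=\infty\}$ one uses $\delta>0$ together with boundedness of $f(\cdot,y)$ (the case $\delta=0$ is handled by monotone passage from $\delta\downarrow 0$). Dominated/monotone convergence then gives $\E_x[e^{-\delta(t\w\tau)}f(X_{t\w\tau},y)]\to 0$ and hence $\sigma^2 f(x,y)=\E_x\int_0^\tau e^{-\delta s}L_{\md s}^y$, which is the asserted equality. The main technical obstacle is justifying the Tanaka-type formula for $f(\cdot,y)$ whose first derivative carries both a Dirac atom at $y$ and a bounded absolutely-continuous part; this is exactly what the Peskir change-of-variables theorem provides, and it is also what makes the precise drift-cancellation above possible.
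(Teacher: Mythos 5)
Your proposal is correct and follows essentially the same route as the paper: apply Itô--Tanaka (via Peskir's change-of-variables formula) to $f(\cdot,y)$ using the measure-valued second derivative from the preceding lemma, observe that the drift terms cancel exactly under the feedback control, apply integration by parts to $e^{-\delta t}f(X_{t\wedge\tau},y)$, take expectations, and pass to the limit $t\to\infty$. The only cosmetic difference is that you handle the boundary case $\delta=0$ by a monotone passage $\delta\downarrow 0$ rather than by the paper's direct case distinction on the sign of $a$; both are valid.
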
 
 \begin{proof}
   Ito Tanaka's formula together with the occupation time formula yield
    \begin{align*}
       f(X_{t\w \tau},y) &= f(x,y) + \int_0^t \sigma f_x(X_{s\w \tau},y) \md W_s - \frac{1}{\sigma^2}L_{t\wedge\tau}^y \\&\quad+ \int_0^t C_sf_{x}(X_{s\w \tau},y) + \frac{\sigma^2}{2}f_{xx}(X_{s\w \tau},y)\md s \\
          &= f(x,y) + \int_0^t \sigma f_x(X_{s\w \tau},y) \md W_s - \frac{1}{\sigma^2}L_{t\wedge\tau}^y + \delta \int_0^t f(X_{s\w \tau},y)\md s.
    \end{align*}
 Using the product formula yields
  $$ e^{-\delta t}    f(X_{t\w \tau},y) = f(x,y) + \int_0^t \sigma e^{-\delta s} f_x(X_{s\w \tau},y) \md W_s -\frac1{\sigma^2} \int_0^{t\wedge \tau} e^{-\delta s} \md L^y_{\md s}. $$
 Since $f_x(\cdot,y)$ is bounded we see that the second summand is a martingale. If $\delta >0$, then we find that
  $$ \lim\limits_{t\rightarrow\infty} \E_x[ e^{-\delta t}    f(X_{t\wedge\tau},y)] = 0. $$
 If $\delta = 0$ and $a\leq 0$, then $\tau < \infty$ $\mP$-a.s.\ and boundedness of $f$ yields 
  $$ \lim_{t\rightarrow\infty} \E_x[ f(X_{t\w \tau},y)] = 0. $$
 If $\delta = 0$ and $a > 0$, then $X_{t\w\tau}$ takes values in $\{0,\infty\}$ and $\lim\limits_{x\rightarrow\infty} f(x,y)=0$, thus boundedness of $f$ yields again
  $$ \lim_{t\rightarrow\infty} \E_x[ f(X_{t\w \tau},y)] = 0. $$
 Thus, we find by monotone convergence
  $$ 0 = f(x,y) -\frac1{\sigma^2} \lim_{t\rightarrow\infty} \E_x\left[\int_0^{t\wedge \tau} e^{-\delta s} \md L^y_{\md s}\right] = f(x,y) -\frac1{\sigma^2} \E_x\left[\int_0^{\tau} e^{-\delta s}  \md L^y_{\md s}\right].$$
 \end{proof}

\begin{proof}[Proof of Theorem \ref{t:occupation bound}]
   Fix $y\geq 0$. For any progressively measurable process $\eta$ with values in $I$ we define
\begin{align*}
Y^\eta_t := X_0 + \int_0^t \eta_s \md s + \sigma W_t \quad\mbox{and}\quad
       V(x) := \sup_{\eta} \E_x\left[ \int_0^{\tau} e^{-\delta s} L^{y,\eta}_{\md s}\right],
\end{align*}    
where $\tau^{\eta} := \inf\{t\geq 0: Y^\eta_t = 0\}$ and $L^{\cdot,\eta}$ denotes a continuous version of the local time of $Y^{\eta}$. Clearly, we have
 $$ \E_x\left[ \int_0^{\tau} e^{-\delta s} L^{y,\eta}_{\md s} \right] \leq V(x). $$
  Moreover, the previous two lemmas yield that $Y^{\eta^*}$ with 
   $$\eta_t^* = a\one_{\{Y^{\eta^*}_t>y\}} + b\one_{\{Y^{\eta^*}_t\leq y\}}$$ 
   is the optimally controlled process and we get $V(x) = \sigma^2f(x,y)$. (The process $\eta^*$ exists because the corresponding SDE admits pathwise uniqueness, confer \cite[Thm IX.3.5]{revuz.yor.05}.)
\end{proof} 
\section{Lower unbounded drift occupation bound}
We will generalise the results from the previous section to the case where no lower bound on the drift is given, i.e.\ the drift is only assumed to be upper Lipschitz-continuous with some rate $b$ (which might be negative). The bounds are summarised in Theorem \ref{t:occupation bound}. Let $b\in\mathbb R$, $\sigma>0$, $\delta\geq 0$, $W$ a standard Brownian motion and consider the process
  $$ \md X_t = \md C_t + \sigma \md W_t $$
 where $C$ is some $I$-valued progressively measurable process satisfying $C_t-C_s \le b(t-s)$ for any $0\leq s\leq t$. And again, we denote by $\mP_x$ a measure with $\mP_x[X_0=x]$. The local time of $X$ at level $y$ and time $t$ is denoted by $L_t^y$ and $\tau:=\inf\{t\geq 0: X_t = 0\}$.
\\Further we define for $x,y\geq 0$
  \begin{align*}
  &\beta_+ := \frac{\sqrt{b^2+2\delta\sigma^2}-b}{\sigma^2}, \quad \beta_- := \frac{-\sqrt{b^2+2\delta\sigma^2}-b}{\sigma^2}, \quad
      f(x,y) := \frac{2\left(e^{\beta_+(x \wedge y)}-e^{\beta_-(x\wedge y)}\right)}{\sigma^2\left(\beta_+e^{y\beta_+}-\beta_-e^{y\beta_-}\right)}.
  \end{align*}

 \begin{Thm}\label{t:occupation bound unbounded}
   We have $ \E_x\left[ \int_0^\tau e^{-\delta s} L_{\md s}^y\right]  \le \sigma^2 f(x,y)$.
   In particular, for any measurable function $\psi:\mathbb R_+\rightarrow\mathbb R_+$ we have
    $$ \E_x\left[\int_0^\tau e^{-\delta s}\psi(X_s) \md s \right] \leq \int_0^\infty \psi(y)f(x,y) \md y. $$
 \end{Thm}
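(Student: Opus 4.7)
The plan is to imitate Theorem~\ref{t:occupation bound}: identify the auxiliary function $f$ as a candidate worst-case occupation value, apply Meyer--Ito to $f(X_{\cdot\w\tau},y)$ and read off the bound. The main structural change compared to Theorem~\ref{t:occupation bound} is that $f$ no longer carries the $\alpha$-factor arising from a lower drift bound, and on $\{x>y\}$ it is constant in $x$ rather than exponentially decaying. This is exactly the modification one expects when the set of admissible drifts is enlarged to $(-\infty,b]$: the maximizer in the corresponding Hamiltonian is still $u=b$ on $\{x<y\}$ but the upper half-line contributes no genuine control.

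The first task is to check analytic properties of $f$. The function $f(\cdot,y)$ is $C^2$ on $\R_+\setminus\{y\}$, non-negative and bounded; $f_x(\cdot,y)$ vanishes on $(y,\infty)$ and on $(0,y)$ is a strictly positive linear combination of $e^{\beta_+x}$ and $-e^{\beta_-x}$, so $f_x$ has a downward jump of $-2/\sigma^2$ at $y$. Because $\beta_\pm$ solve $\tfrac{\sigma^2}{2}\beta^2+b\beta=\delta$, on $(0,y)$ one has $\tfrac{\sigma^2}{2}f_{xx}+bf_x-\delta f=0$, while on $(y,\infty)$ the same expression reduces to $-\delta f\le 0$. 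Moreover, the upper Lipschitz assumption forces $A_t:=bt-C_t$ to be non-decreasing, so $C$ is of locally bounded variation and $X$ is a continuous semimartingale.

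The second distributional derivative of $x\mapsto f(x,y)$ is the signed measure $f_{xx}(x,y)\md x-(2/\sigma^2)\delta_y(\md x)$, so Meyer--Ito (equivalently, Peskir's change of variables formula, already invoked in the proof of Theorem~\ref{t:occupation bound}) gives
\begin{align*}
f(X_{t\w\tau},y)&=f(x,y)+\int_0^{t\w\tau}f_x(X_s,y)\md C_s+\sigma\int_0^{t\w\tau}f_x(X_s,y)\md W_s\\
&\quad+\frac{\sigma^2}{2}\int_0^{t\w\tau}f_{xx}(X_s,y)\md s-\frac{1}{\sigma^2}L^y_{t\w\tau}.
\end{align*}
Multiplying by $e^{-\delta(t\w\tau)}$, integrating by parts in time, and substituting $\md C_s=b\md s-\md A_s$ regroups the drift-type contributions into $\int_0^{t\w\tau}e^{-\delta s}(\tfrac{\sigma^2}{2}f_{xx}+bf_x-\delta f)(X_s,y)\md s$ and $-\int_0^{t\w\tau}e^{-\delta s}f_x(X_s,y)\md A_s$, both of which are non-positive by the sign properties above together with $\md A_s\ge 0$ and $f_x\ge 0$.

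Taking $\E_x$ kills the stochastic integral (a true martingale since $f_x$ is bounded) and bounds the two non-positive drift terms above by zero; using $f\ge 0$ to discard the remaining boundary contribution $\E_x[e^{-\delta(t\w\tau)}f(X_{t\w\tau},y)]\ge 0$ yields $\tfrac{1}{\sigma^2}\E_x[\int_0^{t\w\tau}e^{-\delta s}\md L^y_s]\le f(x,y)$. Monotone convergence as $t\to\infty$ proves the first assertion, and the second follows by combining the occupation time formula $\sigma^2\int_0^\tau e^{-\delta s}\psi(X_s)\md s=\int_0^\infty\psi(y)\int_0^\tau e^{-\delta s}\md L^y_s\,\md y$ with Tonelli's theorem. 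The only genuine technical point is the validity of the Meyer--Ito formula for a function with a kink at $y$ applied to a semimartingale whose finite-variation part is only controlled from above; the decomposition $C_t=bt-A_t$ reduces this to the classical setting and mirrors the invocation of Peskir in the proof of Theorem~\ref{t:occupation bound}.
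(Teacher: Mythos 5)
Your proof is correct, and it takes a genuinely different — and in fact more complete — route than the paper. The paper dispatches Theorem~\ref{t:occupation bound unbounded} in one sentence: ``take the limit $a\to-\infty$ in Theorem~\ref{t:occupation bound}.'' As stated, that is not airtight: Theorem~\ref{t:occupation bound} governs absolutely continuous drifts $\md X_t=C_t\md t+\sigma\md W_t$ with $C_t\in[a,b]$, whereas the unbounded theorem allows a strictly larger class of cumulative drifts $C$ satisfying only $C_t-C_s\le b(t-s)$, which need not be absolutely continuous and need not be bounded below even where they are; a genuine limiting argument would require an approximation-in-the-control step that the paper omits (and one would also need to check that $f_a\to f$ pointwise as $a\to-\infty$, which does hold since $\alpha\to 0$). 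Your direct verification sidesteps all of this: the decomposition $C_t=bt-A_t$ with $A$ non-decreasing cleanly isolates the possibly singular downward part of the drift, and the sign facts $f_x\ge 0$, $\md A\ge 0$ make that contribution non-positive without any lower bound. The remaining computations — the characteristic ODE $\tfrac{\sigma^2}{2}f_{xx}+bf_x-\delta f=0$ on $(0,y)$, the inequality $-\delta f\le 0$ on $(y,\infty)$, the kink of size $-2/\sigma^2$ at $y$, the Meyer--Ito/Peskir step, the martingale argument for the stochastic integral, discarding $\E_x[e^{-\delta(t\w\tau)}f(X_{t\w\tau},y)]\ge 0$, monotone convergence as $t\to\infty$, and the occupation-time-formula reduction for the second assertion — mirror the proof of Theorem~\ref{t:occupation bound} and are all correct. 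One small caveat: your assertion that $X$ is a \emph{continuous} semimartingale does not follow from the upper Lipschitz condition alone, since $A$ could in principle jump upward; but the paper implicitly makes the same continuity assumption when it speaks of $L_t^y$ and $\tau$, so this is a shared presupposition rather than a gap in your argument.
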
  
 \begin{proof}
   The proof follows from Theorem \ref{t:occupation bound} by taking the limit $a\rightarrow-\infty$.
 \end{proof}

\section*{Acknowledgments}
\noindent
The research of the first author was funded by the Austrian Science Fund (FWF), Project number V 603-N35. The first author is currently on leave from the University of Liverpool and would like to thank the University of Liverpool for support and cooperation.

\end{document}